\newtheorem{theorem}{Theorem}
\begin{document}

\title{An Exterior-Embedding Neural Operator Framework for Preserving Conservation Laws}
\def\method{method}

\author{Huanshuo Dong}
\authornote{Both authors contributed equally to this research.}
\email{bingo000@mail.ustc.edu.cn}
\affiliation{%
  \institution{University of Science and Technology of China}
  \city{Anhui}
  \state{Hefei}
  \country{China}
}

\author{Hong Wang}
\authornote{Both authors contributed equally to this research.}
\affiliation{%
  \institution{University of Science and Technology of China}
  \city{Anhui}
  \state{Hefei}
  \country{China}
}

\author{Hao Wu}
\authornote{Both authors contributed equally to this research.}
\affiliation{%
  \institution{Tsinghua University}
  \city{Beijing}
  \state{Haidian District}
  \country{China}
}

\author{Zhiwei Zhuang}
\affiliation{%
  \institution{University of Science and Technology of China}
  \city{Anhui}
  \state{Hefei}
  \country{China}
}

\author{Xuanze Yang}
\affiliation{%
  \institution{University of Science and Technology of China}
  \city{Anhui}
  \state{Hefei}
  \country{China}
}

\author{Ruiqi Shu}
\affiliation{%
  \institution{Tsinghua University}
  \city{Beijing}
  \state{Haidian District}
  \country{China}
}

\author{Yuan Gao}
\affiliation{%
  \institution{Tsinghua University}
  \city{Beijing}
  \state{Haidian District}
  \country{China}
}

\author{Xiaomeng Huang}
\affiliation{%
  \institution{Tsinghua University}
  \city{Beijing}
  \state{Haidian District}
  \country{China}
}





\begin{abstract}
Neural operators have demonstrated considerable effectiveness in accelerating the solution of time-dependent partial differential equations (PDEs) by directly learning governing physical laws from data.
However, for PDEs governed by conservation laws(e.g., conservation of mass, energy, or matter), existing neural operators fail to satisfy conservation properties, which leads to degraded model performance and limited generalizability.
Moreover, we observe that distinct PDE problems generally require different optimal neural network architectures. This finding underscores the inherent limitations of specialized models in generalizing across diverse problem domains.
To address these limitations, we propose \textbf{E}xterior-Embedded \textbf{C}onservation \textbf{F}ramework (ECF), a universal conserving framework that can be integrated with various data-driven neural operators to enforce conservation laws strictly in predictions. 
The framework consists of two key components: a conservation quantity encoder that extracts conserved quantities from input data, and a conservation quantity decoder that adjusts the neural operator's predictions using these quantities to ensure strict conservation compliance in the final output. 
Since our architecture enforces conservation laws, we theoretically prove that it enhances model performance.
To validate the performance of our method, we conduct experiments on multiple conservation-law-constrained PDE scenarios, including adiabatic systems, shallow water equations, and the Allen-Cahn problem.
These baselines demonstrate that our method effectively improves model accuracy while strictly enforcing conservation laws in the predictions.

\end{abstract}

\vspace{-10pt}
\keywords{}


\maketitle

\section{Introduction}
Partial differential equations (PDEs) provide the foundational framework for modeling multiscale phenomena across physics, chemistry, and biology. While traditional discretization methods (FDM, FVM, FEM)~\cite{godunov1959finite,dhatt2012finite,eymard2000finite,trefethen1996finite} remain indispensable, they suffer from inherent limitations: (1) the accuracy-speed tradeoff between fine and coarse grids, and (2) the need to re-run simulations for every small change—particularly problematic for large-scale applications like ocean forecasting. These limitations motivate the adoption of data-driven approaches for faster solutions.

Recent deep learning research has introduced a new paradigm for PDE modeling and prediction. For example, Neural operators can learn underlying physical relationships from data, enabling state prediction with reduced computational costs while achieving notable results in related studies. Unlike conventional deep learning applications, neural operator-based temporal prediction essentially learns nonlinear operator mappings between infinite-dimensional Banach spaces. 
This requires models to capture intrinsic dynamical system behaviors rather than performing simple data fitting.

However, current neural operators suffer from a fundamental limitation—physical non-conservation. For example, the Shallow Water equations~\cite{takamoto2022pdebench} represent a system where mass conservation must be preserved—the total water volume remains constant throughout the temporal evolution of water depth. Traditional numerical methods inherently incorporate prior knowledge of conservation laws through specialized algorithmic designs. Taking the Finite Volume Method (FVM) as an example, this approach partitions the computational domain into non-overlapping control volumes (grid cells), with its core principle rooted in conservation laws—the net flux of a physical quantity entering a control volume equals the rate of change of that quantity within the volume. This structural design fundamentally ensures strict adherence to conservation relationships throughout the solution process.

In contrast, current neural operator architectures lack dedicated structural designs for enforcing conservation laws. As illustrated in Figure~\ref{fig: visual result}, taking the CNO model as an example, we observe that its conserved quantity exhibits continuous temporal decay, indicating progressively accumulating conservation errors over time. This ultimately leads to significantly bluer regions in CNO's predictions compared to the ground truth (particularly in the central local area), causing the predicted solutions to deviate from physically realistic ones.

\begin{figure}
    \centering
    \subfigure[\normalsize Visual Prediction Comparison]{
    \includegraphics[width = 1 \linewidth]{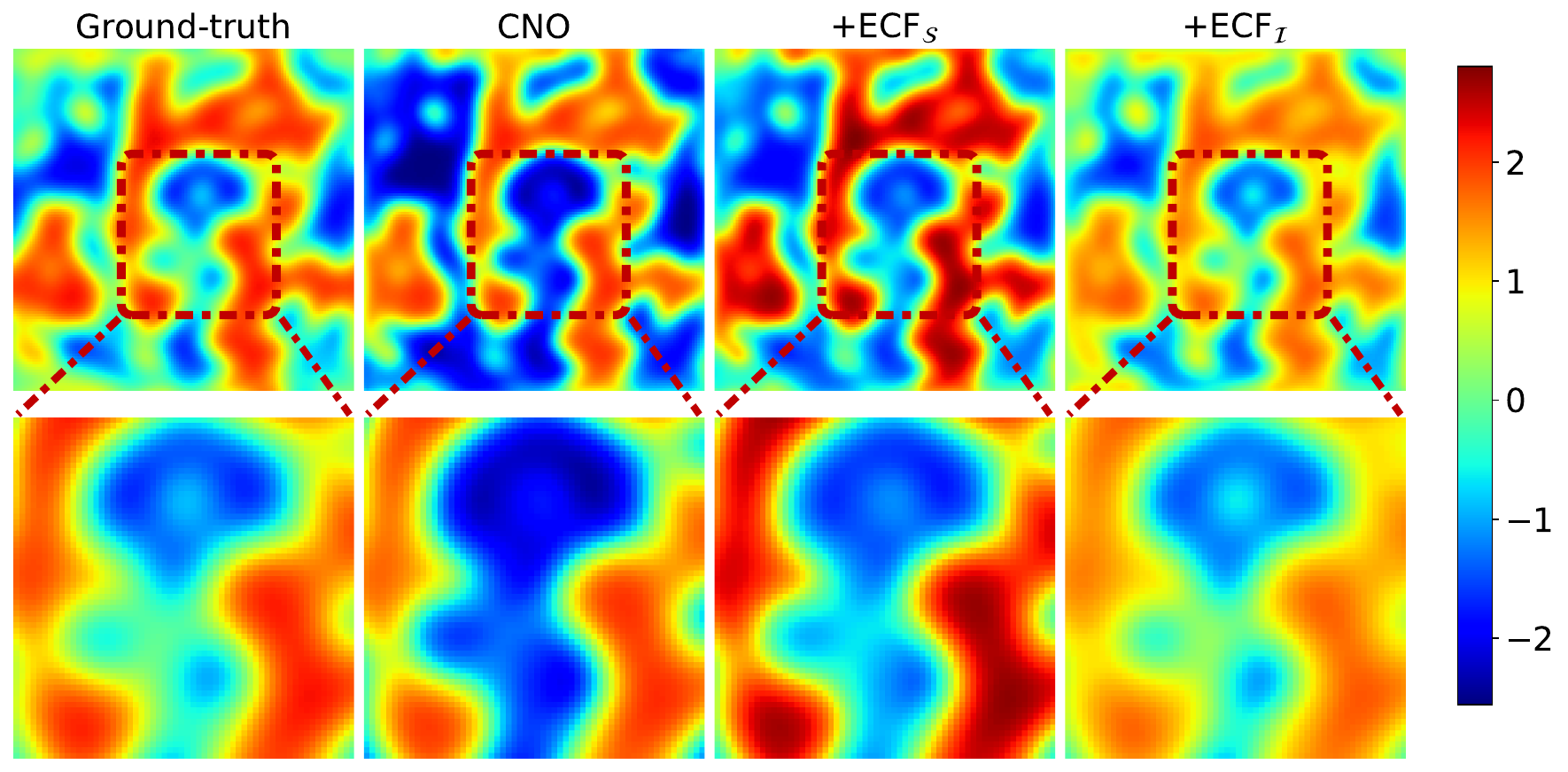}
    }\label{fig: a}
    \subfigure[\normalsize Conserved Quantity Comparison]{\includegraphics[width = 0.48\linewidth]{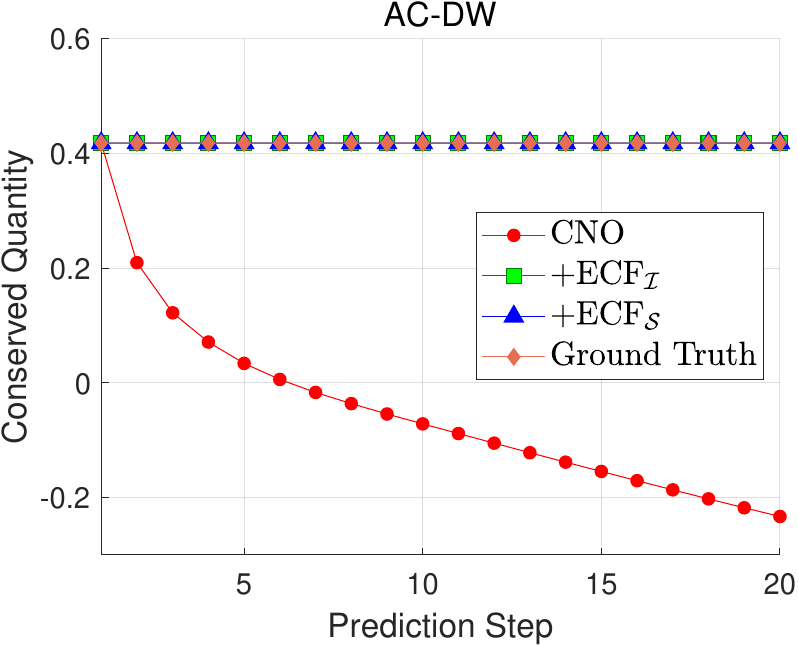}
    \includegraphics[width = 0.48\linewidth]{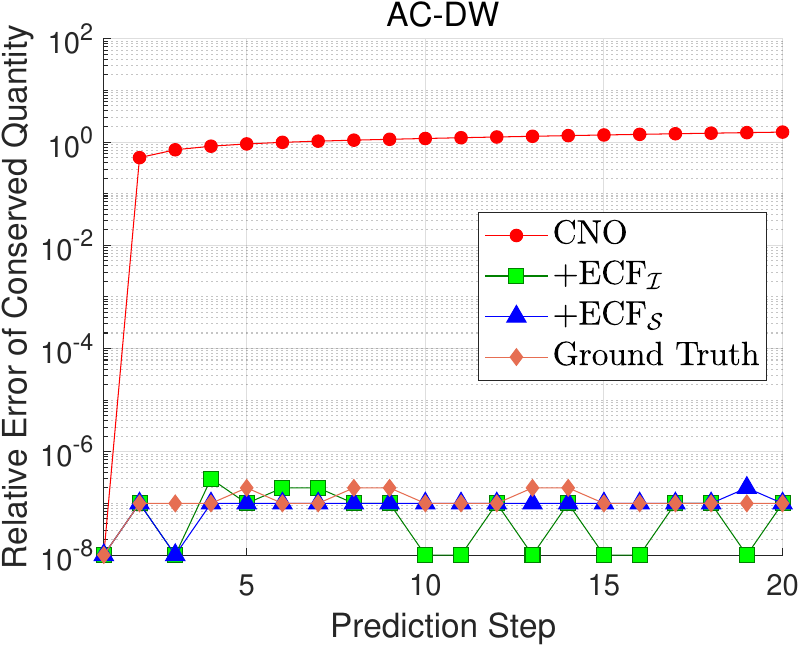}
    }\label{fig: b}
    \caption{(a) demonstrates the visualization of the CNO model's final prediction results on the Allen-Cahn equation (AC-DW) dataset. The first row presents the global prediction results, while the second row displays the corresponding central local region predictions, with deeper blue shades indicating smaller numerical values. (b) presents the predicted conserved quantities of both the CNO model and its corresponding embedded framework over 20 time steps, along with the relative error variations of these quantities compared to their initial states.}
    \label{fig: visual result}
\end{figure}
Notably, in the field of Neural Operators~\cite{hao2022physics, mcgreivy2024weak, zhang2025artificial}, we observe that different datasets often require distinct optimal neural operator architectures. As shown in Figure~\ref{fig: different neural operator result}, UNO achieves the best prediction performance for the Allen-Cahn problem, while FNO yields optimal results for adiabatic systems. 
This architectural dependence creates a fundamental challenge for conservation-preserving designs: specialized neural operators that enforce conservation laws may achieve excellent performance on their target systems, but their rigid architectures cannot adapt to different physical scenarios where alternative operator architectures perform better.
Therefore, we propose to develop a universal plug-and-play framework that can be directly embedded into various neural operators to enhance their performance.

We propose a universal framework called the \textbf{E}xterior-Embedded \textbf{C}onservation \textbf{F}ramework (ECF) to address these issues, which is compatible with various existing neural operators.
Our approach introduces two core innovations that fundamentally address the conservation-law challenge:
1) Conserved quantity Encoder: Extracts conservation-related information from input data through the Fourier transform.
2) Conserved quantity Decoder: Corrects the frequency-domain predictions of the neural operator in Fourier space using the conservation information obtained from the encoder, then applies the inverse Fourier transform to produce the final refined predictions.
To address potential gradient trajectory alterations introduced by the framework, we designed two training paradigms: Integrated Training Mode (ECF$_{\mathcal{I}}$) incorporates the correction framework as an end-to-end optimized component, and Staged Training Mode (ECF$_{\mathcal{S}}$) first trains the neural operator independently before applying the correction module.

Our framework constitutes a stable algorithm for reducing errors at specific frequencies without affecting predictions in other frequency bands, thereby theoretically guaranteeing strict error reduction in the predictions. Extensive experiments on benchmark datasets also validate the method's effectiveness. As demonstrated in Fig.~\ref {fig: visual result}, our proposed algorithm rigorously enforces conservation laws (with conservation errors limited only by machine precision), since our framework achieves superior prediction accuracy in the dark regions compared to the CNO model.

In summary, our contributions are as follows:

\begin{itemize}[leftmargin=*]
    \item \textbf{\textit{A Universal and Flexible Conservation Framework}:} We propose the \textbf{Exterior-Embedded Conservation Framework (ECF)}, a universal plug-and-play module that strictly enforces physical conservation laws in any neural operator via frequency-domain correction. Its design includes two flexible training paradigms, \textbf{Integrated Mode (ECF$_{\mathcal{I}}$)} and \textbf{Staged Mode (ECF$_{\mathcal{S}}$)}, making it adaptable to diverse optimization needs without altering the base operator's architecture.

    \item \textbf{\textit{Rigorous Theoretical Guarantees}}: Through frequency-domain analysis, we demonstrate the correlation between conservation quantity errors and relative Root Mean Square Error(RMSE), strictly reducing errors while preserving model expressiveness, providing a theoretical foundation for physical constraints in neural networks.

    \item \textbf{\textit{Comprehensive Experimental Validation}:} We conduct extensive experiments on a new \textit{benchmark} of $\textbf{6}$ distinct conservation law constrained PDE problems, which we developed to ensure standardized and fair evaluation. The results validate the superiority and generality of our framework, demonstrating substantial performance gains across all baselines, with improvements of up to \textbf{$37.7\%$}.
\end{itemize}

\section{Related Work}
\textbf{Neural Operator.} Neural Operators represent an emerging mathematical modeling framework designed to learn the implicit Green's functions of partial differential equations (PDEs), providing a universal solution for characterizing spatiotemporal dynamical systems ~\cite{li2020neural,li2020fourier,wu2024transolver,raonic2023convolutional,rahman2022u,hao2023gnot}. Unlike traditional neural networks operating in finite-dimensional Euclidean spaces, it specializes in learning mappings between infinite-dimensional function spaces~\cite{li2020fourier}. This capability allows it to effectively capture high-order nonlinear relationships in input-output function pairs.
Several implementations with distinct advantages have emerged: Fourier Neural Operators (FNO) and variants achieve cross-resolution generalization through spectral integral operators~\cite{li2020fourier}; Convolutional Neural Operators (CNO) extract spatial features via local convolution kernels~\cite{raonic2023convolutional}; U-NO supports deep architectures with memory optimization~\cite{rahman2022u}; While Transolver innovatively employs physics-driven attention mechanisms to balance computational efficiency and physical consistency~\cite{wu2024transolver,hao2023gnot}. 
These models work well in some cases, but they don't have special parts built in to follow conservation rules exactly, so they can't guarantee conservation laws will always be obeyed.

\textbf{Learning hidden physics.} Learning and predicting complex physical phenomena directly from data is widespread across scientific and engineering applications~\cite{ghaboussi1998autoprogressive,ghaboussi1991knowledge,besnard2006finite,cai2021physics,carleo2019machine,he2021manifold,karniadakis2021physics,pfau2020ab,zhang2018deep}. The traditional finite volume method (FVM)~\cite{eymard2000finite} divides the computational domain into non-overlapping control volumes and assigns a node to each volume to store physical quantities. By integrating the governing conservation equations over each individual control volume, FVM transforms the principle of global conservation into local algebraic equations. This approach rigorously enforces conservation laws by ensuring that the net flux (inflows minus outflows) across each control volume's boundaries is balanced. In contrast, Physics-Informed Neural Networks (PINNs)~\cite{raissi2019physics} incorporate the governing physical equations as soft constraints within the neural network's loss function, integrating known physical laws into the learning process. Emerging PINN variants, such as the Allen-Cahn Neural Network (ACNN)~\cite{geng2024deep} for solving the Allen-Cahn equation and ClawNOs~\cite{liu2023harnessing} for divergence-free prediction, embed specific conservation laws as physical priors within this framework.
This study focuses on learning physical systems with conservation law constraints. Typical scenarios include but are not limited to: energy conservation in adiabatic systems and mass conservation in shallow water wave equations~\cite{caratheodory1909untersuchungen,takamoto2022pdebench}. Modeling such systems presents dual challenges: simultaneously achieving both data-fitting accuracy and strict compliance with physical constraints.

\section{Preliminary}
\subsection{Conservation Laws in Physical Systems}
The standard form of the conservation law of partial differential equations can be expressed as:
\begin{equation}
    \frac{\partial \bm{u}}{\partial t} + \nabla \cdot \bm{F}(\bm{u}) = \bm{S}(\bm{u},\bm{x},t), \quad \bm{x} \in \Omega, t \in [0,T],
    \label{eq:conservation_law_standard}
\end{equation}
where $\bm{u}(\bm{x},t)\in\mathbb{R}^d$ is the $d$-dimensional state vector representing conserved quantities, $\bm{F}(\bm{u})\in\mathbb{R}^{d\times m}$ is the flux tensor field characterizing transport processes, $\bm{S}(\bm{u},\bm{x},t)$ denotes source/sink terms accounting for external influences or internal reactions, and $\Omega\subset\mathbb{R}^m$ represents the $m$-dimensional spatial domain with boundary $\partial\Omega$.

Conservation laws fundamentally mean that the total amount of certain system quantities (like energy or mass) remains constant over time. These conserved quantities are defined as:
\begin{equation}
    \bm{E}(t) = \int_{\Omega}\bm{u}(\bm{x},t)d\bm{x}.
    \label{eq:conserved_quantity}
\end{equation}

Applying Gauss's divergence theorem, we obtain the following fundamental theorem(see Appendix~\ref{app: theorem proof} for complete proof):
\begin{theorem}\label{eq: theorem green}
The time derivative of the conserved quantity satisfies:
\begin{equation}
    \begin{aligned}
        \frac{d\bm{E}}{dt} = -\oint_{\partial\Omega}\bm{F}(\bm{u})\cdot\bm{n}dS + \int_{\Omega}\bm{S}d\bm{x},
    \end{aligned}
\end{equation}
where $\partial \Omega$ denotes the boundary of the domain $\Omega$ and $\bm{n}$ is the outward unit normal vector on $\partial\Omega$.
\end{theorem}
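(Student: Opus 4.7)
The plan is to derive the identity by differentiating the definition of $\bm{E}(t)$ under the integral sign and then converting the resulting volume integral of a divergence into a boundary flux via Gauss's theorem. First I would observe that since $\Omega$ is fixed in time and $\bm{u}(\bm{x},t)$ is assumed smooth enough that $\partial_t \bm{u}$ is integrable on $\Omega$, the Leibniz rule for differentiation under the integral sign yields
\begin{equation*}
\frac{d\bm{E}}{dt} = \frac{d}{dt}\int_\Omega \bm{u}(\bm{x},t)\, d\bm{x} = \int_\Omega \frac{\partial \bm{u}}{\partial t}(\bm{x},t)\, d\bm{x}.
\end{equation*}

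Next I would substitute the governing PDE (Equation~\eqref{eq:conservation_law_standard}) in the form $\partial_t \bm{u} = -\nabla\cdot\bm{F}(\bm{u}) + \bm{S}(\bm{u},\bm{x},t)$ into this expression, splitting the right-hand side into a divergence term and a source term:
\begin{equation*}
\frac{d\bm{E}}{dt} = -\int_\Omega \nabla\cdot\bm{F}(\bm{u})\, d\bm{x} + \int_\Omega \bm{S}(\bm{u},\bm{x},t)\, d\bm{x}.
\end{equation*}
I would then apply the divergence theorem componentwise to the first term (treating the tensor field $\bm{F}(\bm{u})\in\mathbb{R}^{d\times m}$ row by row), which replaces the volume integral of $\nabla\cdot\bm{F}(\bm{u})$ by the surface integral $\oint_{\partial\Omega}\bm{F}(\bm{u})\cdot\bm{n}\, dS$, where $\bm{n}$ is the outward unit normal. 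Combining the two terms yields precisely the claimed identity.

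The main obstacle I anticipate is not conceptual but regulatory: rigorously justifying the interchange of $d/dt$ with the spatial integral, and ensuring that $\bm{F}(\bm{u}(\cdot,t))$ belongs to a class (e.g., $C^1(\overline{\Omega})$, or more generally $H(\mathrm{div},\Omega)$ with a well-defined normal trace on $\partial\Omega$) for which the divergence theorem applies componentwise. Both conditions are standard in the classical theory of conservation laws and are implicit in the smoothness assumptions on $\bm{u}$, $\bm{F}$, and $\partial\Omega$ made in the problem setup; in the appendix it should suffice to state these regularity hypotheses explicitly and then invoke the Leibniz rule and Gauss's theorem as stated.
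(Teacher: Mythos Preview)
Your proposal is correct and follows essentially the same approach as the paper's own proof: differentiate $\bm{E}(t)$ under the integral sign, substitute the governing PDE to replace $\partial_t\bm{u}$, and then apply Gauss's divergence theorem to the $\nabla\cdot\bm{F}(\bm{u})$ term. Your additional remarks on the regularity hypotheses needed to justify the Leibniz rule and the divergence theorem are more careful than the paper's treatment, which simply invokes these steps without further comment.
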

In Theorem~\ref{eq: theorem green}, the surface integral $\oint_{\partial\Omega} \bm{F}(\bm{u}) \cdot \bm{n} \, dS$ quantifies the net flux across the domain boundaries, encoding both convective and diffusive transport mechanisms. The volume integral $\int_{\Omega} \bm{S} \, d\bm{x}$ captures the net production (when $\bm{S}>0$) or dissipation (when $\bm{S}<0$) within the domain. This theorem reflects the fundamental physical principle that the temporal change of $\bm{E}$ equals the net inflow through boundaries plus the net internal generation.

Strict conservation requires satisfaction of two independent conditions:
\begin{itemize}
    \item The boundary condition $\bm{F}(\bm{u})\cdot\bm{n}|_{\partial\Omega}\equiv\bm{0}$ ensures no net flux exchange
    \item The source term condition $\int_{\Omega}\bm{S}d\bm{x}=\bm{0}$ guarantees global balance between internal generation and dissipation
\end{itemize}

We take adiabatic problems as an example, which satisfy the following equations:
\begin{equation}
    \begin{aligned}\label{eq:heat}
        &\frac{\partial \bm{u}}{\partial t} + \Delta \bm{u} = \bm{0}, \quad \bm{x} \in \Omega, t \in [0,T]\\
        &\nabla \bm{u} \cdot \bm{n} |_{\partial\Omega} = \bm{0}.
    \end{aligned}
\end{equation}

This system exhibits strict conservation because:
\begin{itemize}
    \item The Neumann boundary condition $\nabla\bm{u}\cdot\bm{n}|_{\partial\Omega}=\bm{0}$ satisfies the flux condition $\bm{F}(\bm{u})\cdot\bm{n}|_{\partial\Omega}\equiv\bm{0}$ (with $\bm{F}(\bm{u})=\nabla\bm{u}$)
    \item The homogeneous PDE ($\bm{S}\equiv\bm{0}$) trivially satisfies $\int_{\Omega}\bm{S}d\bm{x}=\bm{0}$
\end{itemize}

The conservation property is verified through:
\begin{equation}
    \frac{d\bm{E}}{dt} = -\underbrace{\oint_{\partial\Omega}\nabla\bm{u}\cdot\bm{n}dS}_{=0} + \underbrace{\int_{\Omega}\bm{0}\,d\bm{x}}_{=0} = 0,
\end{equation}
which implies:
\begin{equation}
    \bm{E}(t) \equiv \int_{\Omega}\bm{u}(\bm{x},t)d\Omega = \bm{E}(0) = \text{const}.
\end{equation}

\subsection{Neural Operators}
We consider a general form of temporal partial differential equations (PDEs). Let the variable $\bm{u}(\bm{x},t)\in \mathbb{R}^d$ defined on $\Omega\times T\subset  \mathbb{R}^{m+1}$ satisfy the following system~\cite{evans2022partial}:
\begin{equation}
    \begin{aligned}
    \frac{\partial \bm{u}}{\partial t} + \nabla \cdot \bm{F}(\bm{u}) &= \bm{S}(\bm{u},\bm{x},t), \quad &\bm{x} \in \Omega, t \in [0,T], \\
    \bm{u}(\bm{x},0) &= \bm{u}^0(\bm{x})  &\bm{x} \in \Omega,
    \end{aligned}
\end{equation}
where $\bm{F}(\bm{u})\in\mathbb{R}^{d\times m}$ is the flux tensor field characterizing transport processes, $\bm{S}(\bm{u},\bm{x},t)$ denotes source/sink terms accounting for external influences or internal reactions, $\bm{u}^0$ represents the initial condition. This class of initial value problems is general enough to encompass many fundamental PDEs, such as diffusion equations, shallow water equations, Allen-Cahn problems, etc. The neural operator $\mathcal{N}_\theta$ for predicting PDE solutions can be expressed as~\cite{kovachki2023neural}:
\begin{equation}
    \begin{aligned}
        \mathcal{N}_\theta:\ &\mathcal{U}\rightarrow \mathcal{V}
        \\&\bm{u}^{t}\rightarrow \bm{u}^{t+1},
    \end{aligned}
\end{equation}
where $\mathcal{U}$ and $\mathcal{V}$ are the Sobolev spaces containing the solution functions $\bm{u}^{t}$ and $\bm{u}^{t+1}$, respectively.
In practical computations, the solution function $\bm{u}$ is discretized on spatiotemporal grids as $\bm{u}=(\bm{u}^{0},\ldots,\bm{u}^{T})$, where $\bm{u}^{t}=\{\bm{u}(\bm{x},t):\bm{x}\in \Omega \}$ represents the spatial field distribution at time step $t$ (with $t$ denoting the temporal iteration index rather than specific time values), for $1\leq t\leq T$.
Within the optimization framework based on a finite-dimensional parameter space $\Theta$, we can determine the optimal parameter $\theta^{*} \in \Theta$ by defining a loss function and implementing specific optimization algorithms that leverage observed grid data $\bm{u}^{t}$ and its evolved state $\bm{u}^{t+1}$. The optimal parameter $\theta^*$ satisfies $\mathcal{N}_{\theta^*}(\bm{u}^{t})\approx \bm{u}^{t+1}$, thereby constructing a learning paradigm in an infinite-dimensional space.

These neural operators aim to learn infinite-dimensional operators that describe the evolution of physical variables, thereby achieving generalized modeling of the entire physical system in Banach space.

\section{Method}
\subsection{Motivation}
\begin{figure}
    \centering
    \includegraphics[width=1\linewidth]{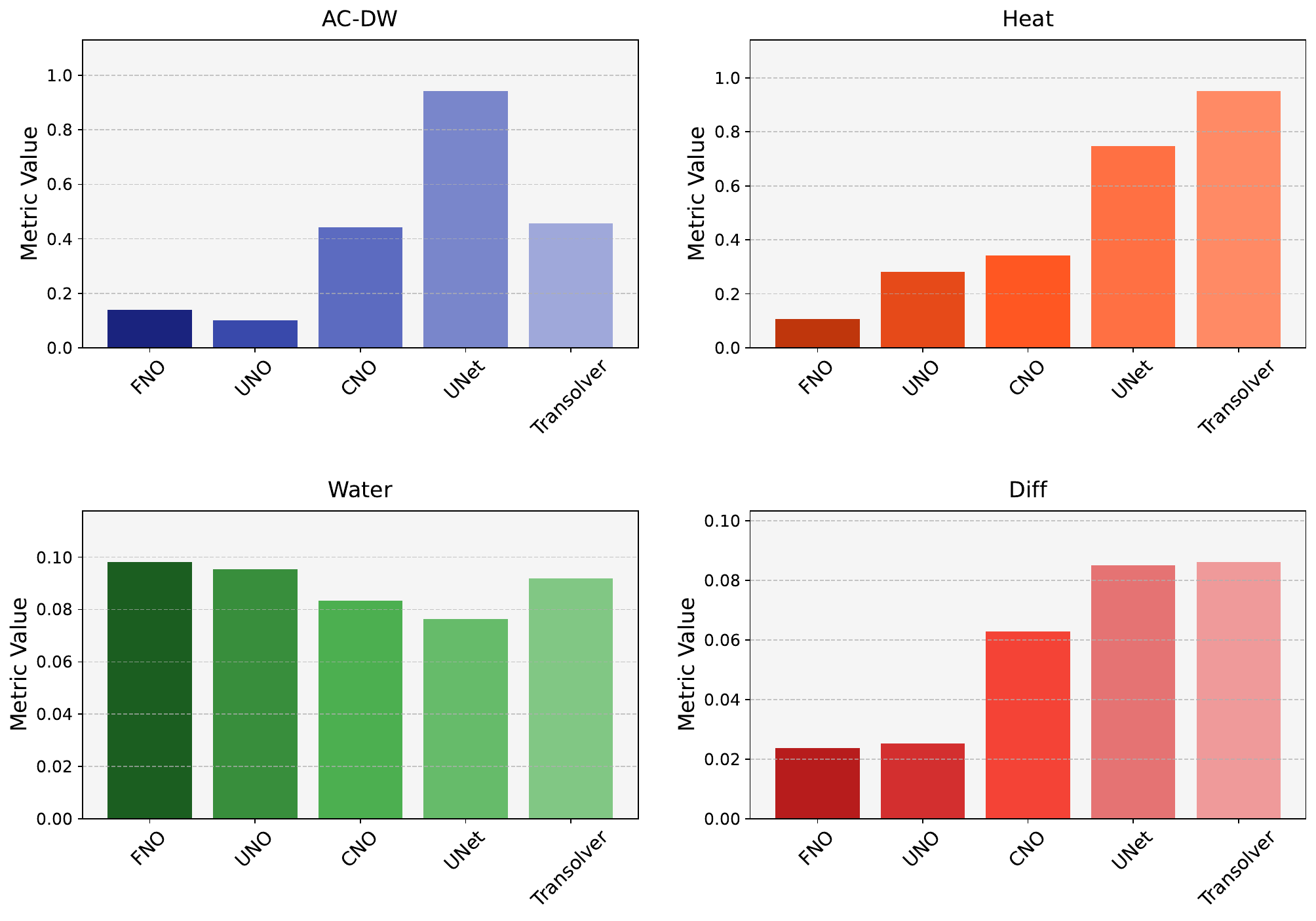}
    \caption{The predictive results of five models on four different datasets were evaluated using relative mean squared error (RMSE), where smaller errors indicate better model performance.}
    \label{fig: different neural operator result}
\end{figure}

As demonstrated in Figure~\ref{fig: visual result}, current neural operators fail to satisfy conservation laws when solving conservation-critical scenarios. To address this limitation, we propose to develop a novel framework that enables models to rigorously capture conservation law information. Furthermore, experimental results in Figure~\ref{fig: different neural operator result} reveal that the optimal neural operator varies significantly across different problems. Thus, our goal is to design a universal framework capable of augmenting any neural operator while strictly enforcing conservation law constraints.

For clarity, we consider a PDE problem defined on the $m$-dimensional domain $\Omega = [0,L]^m$. The ground truth $\hat{\bm{v}}(\bm{x},t)$ can be expressed as a multidimensional Fourier series:
\begin{equation}
\begin{aligned}\label{eq:Fourier}
    \hat{\bm{v}}(\bm{x},t) = \sum_{\bm{n}\in\mathbb{Z}^m} \hat{\bm{c}}_{\bm{n}}(t) e^{i\frac{2\pi}{L}\bm{n}\cdot\bm{x}},
\end{aligned}
\end{equation}
where $\bm{n} = (n_1,...,n_m)$ is the multi-index of frequency modes, and the coefficients $\hat{\bm{c}}_{\bm{n}}(t)$ are given by:
\begin{equation}
\begin{aligned}
\hat{\bm{c}}_{\bm{n}}(t) = \frac{1}{L^m}\int_\Omega \hat{\bm{v}}(\bm{x},t) e^{-i\frac{2\pi}{L}\bm{n}\cdot\bm{x}} d\bm{x}.
\end{aligned}
\end{equation}
Here, $\hat{\bm{c}}_{\bm{n}}(t)$ represents the complex Fourier coefficient corresponding to frequency mode $\bm{n}$, capturing the amplitude and phase information of each spectral component. The zero-frequency term $\hat{\bm{c}}_\mathbf{0}(t)$, obtained when $\bm{n}=\mathbf{0}$, corresponds to the spatial average of the solution over the domain $\Omega$ and plays a crucial role in conservation laws. 
Specifically, the total conserved quantity $\hat{\bm{E}}(t)$ is closely connected with the zero-frequency component $\hat{\bm{c}}_{\mathbf{0}}(t)$:
\begin{equation}
\begin{aligned}
\hat{\bm{E}}(t) = \int_\Omega \hat{\bm{v}}(\bm{x},t) d\bm{x} = \hat{\bm{c}}_{\mathbf{0}}(t) L^m,
\end{aligned}
\end{equation}
as all oscillatory terms ($\bm{n}\neq\mathbf{0}$) integrate to zero over $\Omega$. The conservation law requires:
\begin{equation}\label{eq:a0 equality}
\hat{\bm{E}}(t) = \hat{\bm{c}}_{\mathbf{0}}(t) L^m = \text{constant} \implies \hat{\bm{c}}_{\mathbf{0}}(t) = \hat{\bm{c}}_{\mathbf{0}}(t+1)=\text{constant}.
\end{equation}
For the solution $\bm{v}(\bm{x},t+1)=\mathcal{N}_\theta(\hat{\bm{v}}^t)$ predicted by the neural operator, we perform an analogous Fourier decomposition:
\begin{equation}
\begin{aligned}
\bm{v}(\bm{x},t+1) = \sum_{\bm{n}\in\mathbb{Z}^m} \bm {c}_{\bm{n}}(t+1) e^{i\frac{2\pi}{L}\bm{n}\cdot\bm{x}},
\end{aligned}
\end{equation}
where the coefficients $\bm{c}_{\bm{n}}(t+1)$ are computed through the $m$-dimensional Fourier transform of the neural operator's output.

To enforce conservation, we correct the predicted solution by preserving the zero-frequency mode from the input:
\begin{equation}
\begin{aligned}\label{eq:Corrected conserved quantities}
\bar{\bm{v}}(\bm{x},t+1) = \hat{\bm{c}}_{\mathbf{0}}(t) + \sum_{\bm{n}\in\mathbb{Z}^m\setminus\{\mathbf{0}\}} \bm{c}_{\bm{n}}(t+1) e^{i\frac{2\pi}{L}\bm{n}\cdot\bm{x}}.
\end{aligned}
\end{equation}

The conserved quantity of the corrected solution then satisfies:
\begin{equation}\label{eq:integration unchanged}
\begin{aligned}
\bar{\bm{E}}(t+1) &= \int_\Omega \bar{\bm{v}}(\bm{x},t+1) d\bm{x} \\
&= \hat{\bm{c}}_{\mathbf{0}}(t)L^m + \sum_{\bm{n}\neq\mathbf{0}} \bm{c}_{\bm{n}}(t+1) \underbrace{\int_\Omega e^{i\frac{2\pi}{L}\bm{n}\cdot\bm{x}} d\bm{x}}_{=0} \\
&= \hat{\bm{E}}(0).
\end{aligned}
\end{equation}

This indicates that the corrected solution $\bar{\bm{v}}^{t+1}$ at the current time step strictly satisfies the conservation property. Essentially, the neural operator predicts the Fourier coefficients for each frequency mode. Our correction enforces energy conservation by setting the zero-frequency term to its ground truth coefficient $\hat{\bm{c}}_{\mathbf{0}}(t)$, while preserving all higher-frequency components ($\bm{c}_{\bm{n}}(t+1)$ for $\bm{n}\neq\mathbf{0}$). We analyze the relationship between Fourier expansion coefficients and prediction errors, as well as the error correction effects of Equation~\ref{eq:Corrected conserved quantities}, through the following theorems:

\begin{theorem}\label{thorem: frequent and error}
    The error between the predicted result $\bm{v}^{t+1}$ and the ground truth $\hat{\bm{v}}^{t+1}$ can be expressed in terms of their Fourier coefficients. Specifically:
    \begin{equation}
        \begin{aligned}
            &\|\bm{v}^{t+1} - \hat{\bm{v}}^{t+1}\|_{L^2(\Omega)}^2\\ 
            &= L^m \sum_{\bm{n}\in\mathbb{Z}^m} |\bm{c}_{\bm{n}}(t+1) - \hat{\bm{c}}_{\bm{n}}(t+1)|^2 \\
            &= L^m \Big[ |\bm{c}_{\mathbf{0}}(t+1) - \hat{\bm{c}}_{\mathbf{0}}(t+1)|^2 \\
            &\ \quad + \sum_{\bm{n}\in\mathbb{Z}^m\setminus\{\mathbf{0}\}} |\bm{c}_{\bm{n}}(t+1) - \hat{\bm{c}}_{\bm{n}}(t+1)|^2 \Big],
        \end{aligned}
    \end{equation}
    where $\|\cdot\|_{L^2(\Omega)}$ represents the $L^2$-norm on domain $\Omega$, defined as $\| f\|_{L^2(\Omega)} = \left(\int_{\Omega}|f(\bm{x})|^2d\bm{x}\right)^{\frac{1}{2}}$. 
\end{theorem}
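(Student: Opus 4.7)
The plan is to recognize that this statement is essentially Parseval's (Plancherel's) identity applied to the multidimensional Fourier series on the cube $\Omega = [0,L]^m$, followed by a trivial splitting of the zero-frequency mode from the rest. Since the theorem is a direct consequence of orthogonality of the complex exponential basis, the proof is short and the only work is careful bookkeeping of normalization constants.

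First I would write the difference as a single Fourier series. Since both $\bm{v}^{t+1}$ and $\hat{\bm{v}}^{t+1}$ are expanded in the same basis $\{e^{i\frac{2\pi}{L}\bm{n}\cdot\bm{x}}\}_{\bm{n}\in\mathbb{Z}^m}$, by linearity
\begin{equation*}
    \bm{v}^{t+1}(\bm{x}) - \hat{\bm{v}}^{t+1}(\bm{x}) = \sum_{\bm{n}\in\mathbb{Z}^m} \bigl(\bm{c}_{\bm{n}}(t+1) - \hat{\bm{c}}_{\bm{n}}(t+1)\bigr)\,e^{i\frac{2\pi}{L}\bm{n}\cdot\bm{x}}.
\end{equation*}
Denote $\bm{d}_{\bm{n}} := \bm{c}_{\bm{n}}(t+1) - \hat{\bm{c}}_{\bm{n}}(t+1)$ to lighten notation.

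Next I would compute the squared $L^2$ norm by inserting this series into $\int_\Omega |\cdot|^2 \, d\bm{x}$, expanding the modulus as a product with its complex conjugate, and exchanging summation with integration (which is justified by square-integrability of the partial sums; one may invoke $L^2$-convergence of the Fourier series). The key orthogonality identity on the cube is
\begin{equation*}
    \int_\Omega e^{i\frac{2\pi}{L}(\bm{n}-\bm{m})\cdot\bm{x}}\, d\bm{x} = L^m\, \delta_{\bm{n},\bm{m}},
\end{equation*}
which follows coordinate-wise since each factor integrates to $L$ when $n_k = m_k$ and to $0$ otherwise. Substituting this collapses the double sum to a single sum and yields
\begin{equation*}
    \|\bm{v}^{t+1} - \hat{\bm{v}}^{t+1}\|_{L^2(\Omega)}^2 = L^m \sum_{\bm{n}\in\mathbb{Z}^m} |\bm{d}_{\bm{n}}|^2.
\end{equation*}

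Finally, I would simply isolate the $\bm{n}=\mathbf{0}$ term from the rest of the sum to obtain the displayed decomposition in the theorem statement. The only mild subtlety, and the main (very modest) obstacle, is justifying the interchange of the infinite sum and the integral; this is handled by noting that the mapping from $L^2(\Omega)$ to $\ell^2(\mathbb{Z}^m)$ via Fourier coefficients is an isometry (up to the factor $L^{m/2}$) so the identity holds for any $\bm{v}^{t+1},\hat{\bm{v}}^{t+1} \in L^2(\Omega)$, which is the natural function space in which the neural operator's output and the ground truth are assumed to live. No further technicalities are required, and the conservation-law application in Equation~\ref{eq:integration unchanged} is exactly the special case where $\bm{d}_{\mathbf{0}}$ is forced to vanish.
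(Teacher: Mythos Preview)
Your proposal is correct and matches the paper's own proof essentially line for line: both write the difference as a Fourier series, invoke orthogonality of the exponentials (the paper phrases it as $\|e^{i\frac{2\pi}{L}\bm{n}\cdot\bm{x}}\|_{L^2(\Omega)}^2 = L^m$, you as the integral identity with the Kronecker delta), obtain the Parseval sum, and split off the $\bm{n}=\mathbf{0}$ term. Your brief remark on justifying the sum--integral interchange via the $L^2$--$\ell^2$ isometry is slightly more explicit than the paper, but there is no substantive difference in approach.
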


\begin{figure*}
    \centering
    \includegraphics[width=1\linewidth]{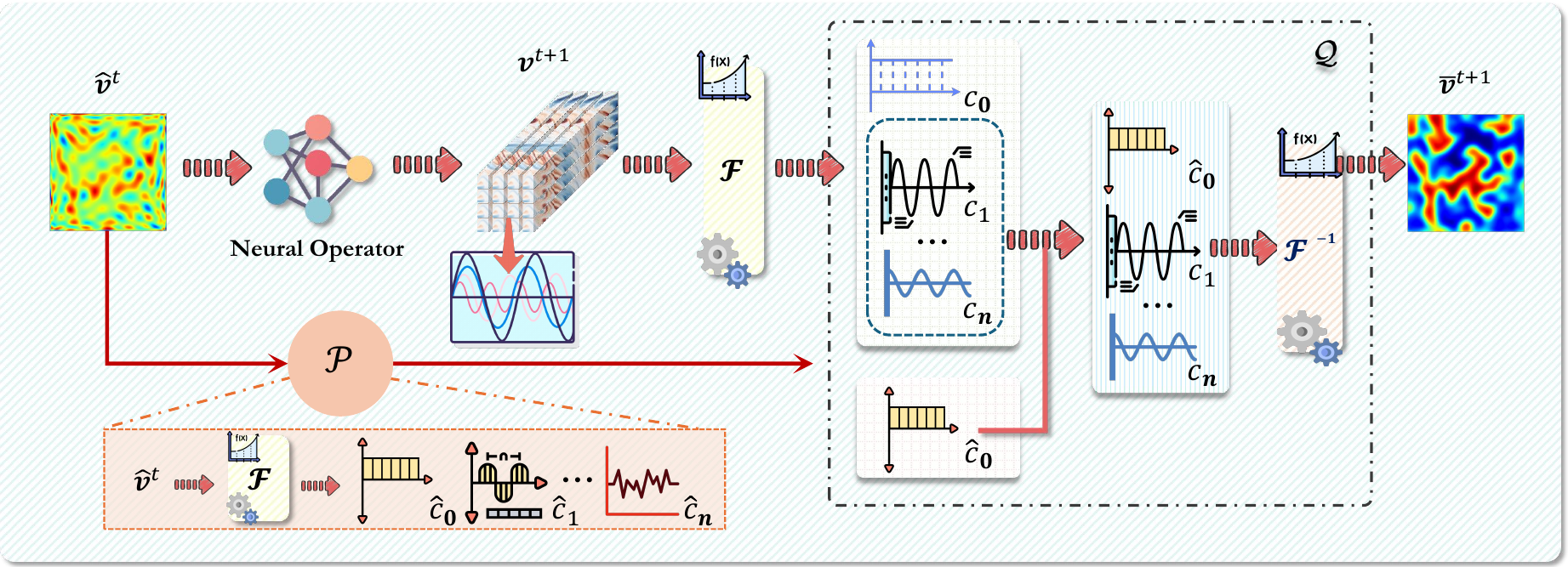}
    \caption{\underline{Overview of the \textit{ECF} architecture}. The \textit{Conserved Quantity Encoder} $\mathcal{P}$ extracts the zero-frequency signal $\hat{\bm{c}}_{\bm{0}}$ from input data via Fourier transform $\mathcal{F}$. The \textit{Conserved Quantity Decoder} $\mathcal{Q}$ replaces $\bm{c}_{\bm{0}}$ in the Neural Operator's predictions with $\hat{\bm{c}}_{\bm{0}}$ and obtains the final output through inverse Fourier transform $\mathcal{F}^{-1}$. The \textit{Neural Operator} in the diagram can represent various types of neural operators.
}
    \label{fig: method}
\end{figure*}

Theorem~\ref{thorem: frequent and error} demonstrates that the total error can be decomposed into a weighted sum of squared errors across all frequency modes, with each Fourier coefficient contributing independently. This implies that modifying the zero-frequency error (i.e., $\bm{c}_{\mathbf{0}}$) does not affect the prediction results at other frequencies.
\begin{theorem}\label{theorem 2}
Let $\hat{\bm{v}}^{t+1}$, $\bm{v}^{t+1}$, and $\bar{\bm{v}}^{t+1}$ denote the ground truth at time $t+1$, the solution function predicted by the neural operator, and the corrected solution function obtained via Equation \ref{eq:Corrected conserved quantities}, respectively. Then:
\begin{equation}
\| \bar{\bm{v}}^{t+1} - \hat{\bm{v}}^{t+1} \|_{L^2(\Omega)} \leq \| \bm{v}^{t+1} - \hat{\bm{v}}^{t+1} \|_{L^2(\Omega)}.
\end{equation}
The equality holds if and only if:
\begin{equation}
    \int_\Omega \bm{v}^{t+1}d\bm{x} = \int_\Omega \hat{\bm{v}}^{t+1} d\bm{x}.
\end{equation}
\end{theorem}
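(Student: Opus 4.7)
The plan is to apply Theorem~\ref{thorem: frequent and error} to both sides of the inequality and exploit the fact that, by construction, $\bar{\bm{v}}^{t+1}$ differs from $\bm{v}^{t+1}$ only in the zero-frequency mode. Reading off coefficients from Equation~\ref{eq:Corrected conserved quantities}, the zero-frequency coefficient of $\bar{\bm{v}}^{t+1}$ is $\hat{\bm{c}}_{\mathbf{0}}(t)$, while $\bar{\bm{c}}_{\bm{n}}(t+1) = \bm{c}_{\bm{n}}(t+1)$ for every $\bm{n}\neq\mathbf{0}$. The conservation identity in Equation~\ref{eq:a0 equality} then lets me replace $\hat{\bm{c}}_{\mathbf{0}}(t)$ by $\hat{\bm{c}}_{\mathbf{0}}(t+1)$, so the zero-frequency coefficient of the corrected field matches the ground-truth coefficient exactly.

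With this alignment in hand, Theorem~\ref{thorem: frequent and error} applied to $\bar{\bm{v}}^{t+1}-\hat{\bm{v}}^{t+1}$ gives
\begin{equation*}
\|\bar{\bm{v}}^{t+1}-\hat{\bm{v}}^{t+1}\|_{L^2(\Omega)}^2 = L^m \sum_{\bm{n}\in\mathbb{Z}^m\setminus\{\mathbf{0}\}} |\bm{c}_{\bm{n}}(t+1) - \hat{\bm{c}}_{\bm{n}}(t+1)|^2,
\end{equation*}
since the zero-mode summand vanishes. Applied to $\bm{v}^{t+1}-\hat{\bm{v}}^{t+1}$, the same theorem yields the identical sum together with an additional non-negative contribution $L^m |\bm{c}_{\mathbf{0}}(t+1) - \hat{\bm{c}}_{\mathbf{0}}(t+1)|^2$ coming from the zero mode. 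Subtracting the two identities therefore isolates
\begin{equation*}
\|\bm{v}^{t+1}-\hat{\bm{v}}^{t+1}\|_{L^2(\Omega)}^2 - \|\bar{\bm{v}}^{t+1}-\hat{\bm{v}}^{t+1}\|_{L^2(\Omega)}^2 = L^m |\bm{c}_{\mathbf{0}}(t+1) - \hat{\bm{c}}_{\mathbf{0}}(t+1)|^2 \geq 0,
\end{equation*}
which establishes the stated inequality.

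For the equality case, the last display vanishes iff $\bm{c}_{\mathbf{0}}(t+1) = \hat{\bm{c}}_{\mathbf{0}}(t+1)$. Since the zero-frequency Fourier coefficient of any field $\bm{f}$ on $\Omega$ equals $L^{-m}\int_\Omega \bm{f}\,d\bm{x}$, this condition is equivalent to $\int_\Omega \bm{v}^{t+1}\,d\bm{x} = \int_\Omega \hat{\bm{v}}^{t+1}\,d\bm{x}$, exactly matching the stated characterization. I do not foresee a substantial obstacle here: once Theorem~\ref{thorem: frequent and error} is in place, the argument is essentially a one-line consequence of the orthogonality of the Fourier basis, and the only mildly delicate point is invoking the conservation identity~\eqref{eq:a0 equality} to realign the time index of the zero-frequency mode so that the input-time coefficient $\hat{\bm{c}}_{\mathbf{0}}(t)$ can legitimately be compared with the target-time coefficient $\hat{\bm{c}}_{\mathbf{0}}(t+1)$.
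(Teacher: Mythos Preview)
Your proposal is correct and follows essentially the same route as the paper: apply Theorem~\ref{thorem: frequent and error} to both error terms, observe that the corrected field matches the ground truth in the zero mode while agreeing with the predicted field in all other modes, and subtract to isolate the non-negative zero-mode discrepancy. Your explicit invocation of Equation~\eqref{eq:a0 equality} to pass from $\hat{\bm{c}}_{\mathbf{0}}(t)$ to $\hat{\bm{c}}_{\mathbf{0}}(t+1)$ is a point the paper's proof leaves implicit, so if anything your write-up is slightly more careful on that detail.
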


Theorem~\ref{theorem 2} demonstrates that our method can rigorously reduce the error of predictions that violate conservation laws. We provide detailed proofs of the above two theorems in Appendix~\ref{app: theorem proof}.

\subsection{Exterior-Embedded Conservation Framework}

Based on the correction scheme presented in Equation~\ref{eq:Corrected conserved quantities}, we propose a framework that can be exterior-embedded into various neural operators to enforce conservation laws, called \textbf{E}xterior-Embedded \textbf{C}onservation \textbf{F}ramework (ECF). As shown in Figure~\ref{fig: method}, our framework consists of three core components: (1) a conserved quantity encoder $\mathcal{P}:\ \mathcal{U} \rightarrow \mathbb{R}^d$ that extracts conserved quantities from initial states, (2) a time evolution operator $\mathcal{N}:\ \mathcal{U}\rightarrow \mathcal{V}$ that learns the evolution laws of physical systems in latent space, (3) a conserved quantity decoder $\mathcal{Q}:\ \mathcal{V}\times \mathbb{R}^d\rightarrow \mathcal{V}$ that corrects neural operator outputs using conserved quantities. Specifically, the model structure can be expressed as:
\begin{equation}
    \begin{aligned}
        \bar{\bm{v}}^{t+1} = \mathcal{Q}(\mathcal{N}(\hat{\bm{v}}^{t}),\mathcal{P}(\hat{\bm{v}}^{t})).
    \end{aligned}
\end{equation}

\subsubsection{Conserved Quantity Encoder}

The conserved quantity encoder is a module designed to capture conserved quantities from data. Specifically:
\begin{equation}
    \begin{aligned}
        \mathcal{P} = \mathcal{T} \circ \mathcal{F},
    \end{aligned}
\end{equation}
where $\mathcal{F}$ is a truncated Fourier transform operator that maps the solution function $\bm{v}^{t}$ in the spatial domain to the frequency domain. For $m$-dimensional problems:
\begin{equation}
    \begin{aligned}
        \mathcal{F}:\ &\mathcal{U}\rightarrow \mathbb{C}^{N_1\times\cdots\times N_m}\\
        & \hat{\bm{v}}^{t} \mapsto \{\hat{\bm{c}}_{\bm{n}}(t)\}_{\bm{n}\in\mathcal{I}},
    \end{aligned}
\end{equation}
where $\mathcal{I} = \{\bm{n}\in\mathbb{Z}^m : \|\bm{n}\|_\infty \leq N\}$ is the frequency index set, and $\hat{\bm{c}}_{\bm{n}}(t)$ are the Fourier coefficients (see Equation~\ref{eq:Fourier}). The truncation operator $\mathcal{T}$ extracts the zero-frequency component:
\begin{equation}
    \begin{aligned}
        \mathcal{T}:\  &\mathbb{C}^{N_1\times\cdots\times N_m}\rightarrow \mathbb{C}\\
        & \{\hat{\bm{c}}_{\bm{n}}(t)\} \mapsto \hat{\bm{c}}_{\mathbf{0}}(t).
    \end{aligned}
\end{equation}
As shown in Equation~\ref{eq:Corrected conserved quantities}, the conservation law requires $\bm{c}_{\mathbf{0}}(t)$ to remain constant, making it the conserved quantity for correction.

\subsubsection{Conserved Quantity Decoder}
The decoder corrects the zero-frequency term in the Fourier transform of predictions:
\begin{equation}
    \begin{aligned}
        \mathcal{Q} = \mathcal{F}^{-1}\circ \mathcal{C}\circ (\mathcal{F}(\bm{v}^{t+1}),\hat{\bm{c}}_{\mathbf{0}}(t)),
    \end{aligned}
\end{equation}
where $\bm{v}^{t+1} = \mathcal{N}_\theta(\bm{v}^{t})$ is the neural operator's prediction,
$\hat{\bm{c}}_{\mathbf{0}}(t) = \mathcal{P}(\hat{v}^{t})$ is the zero-frequency component extracted by the conserved quantity encoder,
and $\mathcal{C}$ is the high-dimensional correction operator:
\begin{equation}
    \begin{aligned}
        \mathcal{C}:\ &\mathbb{C}^{N_1\times\cdots\times N_m}\times\mathbb{C}\rightarrow\mathbb{C}^{N_1\times\cdots\times N_m}\\
        & (\{\bm{c}_{\bm{n}}(t+1)\}, \hat{\bm{c}}_{\mathbf{0}}(t)) \mapsto \{\bm{c}'_{\bm{n}}\}\\
        & \text{with } \bm{c}'_{\bm{n}} = \begin{cases}
            \hat{\bm{c}}_{\mathbf{0}}(t) & \text{if } \bm{n}=\mathbf{0}\\
            \bm{c}_{\bm{n}}(t+1) & \text{otherwise}.
        \end{cases}
    \end{aligned}
\end{equation}
This operation preserves all non-zero frequency modes while enforcing conservation through the zero-frequency term.

\subsubsection{Training Type}\label{sec: training type}
Although Theorem~\ref{theorem 2} theoretically guarantees error reduction through correction, the framework's introduction may alter loss landscape dynamics during training. To address potential gradient trajectory modifications while accommodating different error regimes, we propose two specialized training paradigms:

\begin{enumerate}
    \item \textbf{Integrated Training Mode (+ECF$_{\mathcal{I}}$)}: Designed for scenarios with significant conservation quantity errors, this end-to-end approach incorporates the correction framework directly into the optimization process:
    \begin{equation}
        Loss_{\mathcal{I}} = \|\mathcal{Q}(\mathcal{N}(\hat{\bm{v}}^{t}),\mathcal{P}(\hat{\bm{v}}^{t}))-\hat{\bm{v}}^{t+1}\|_{L^2(\Omega)}^2.
    \end{equation}
    The framework actively guides the neural operator's training through gradient feedback from both $\mathcal{N}$ and $\mathcal{Q}$ components.

    \item \textbf{Staged Training Mode (+ECF$_{\mathcal{S}}$)}: Optimized for stable fine-tuning scenarios with minor conservation errors, this decoupled approach first trains the base operator:
    \begin{equation}
        Loss_{\mathcal{S}} = \|\mathcal{N}(\hat{\bm{v}}^{t})-\hat{\bm{v}}^{t+1}\|_{L^2(\Omega)}^2\ .
    \end{equation}
    followed by the subsequent integration of the correction module. This preserves the original operator's training dynamics while guaranteeing strict error reduction through Theorem~\ref{theorem 2}.
\end{enumerate}
The two paradigms fundamentally represent distinct strategies for spectral editing: +ECF$_{\mathcal{I}}$ achieves collaborative reconstruction of the frequency spectrum, while +ECF$_{\mathcal{S}}$ adopts isolated processing of spectral components. This difference manifests in the Fourier domain as follows: the former induces renormalization of spectral energy distribution, whereas the latter preserves the relative relationship between fundamental frequencies and harmonics.

\section{Experiment}
In this section, we first introduce our datasets, experimental settings, and neural operators used. Then we demonstrate the capability of our framework to learn multiple partial differential equations through extensive experiments. Finally, we conduct additional experiments to further illustrate the performance characteristics of our framework.

\subsection{Setup}
\subsubsection{\textbf{Dataset}}\label{sec: dataset}
As shown in Table~\ref{tab: main experiment}, we use six common PDE problems as our benchmark datasets. Below is an introduction to these datasets, with detailed implementation specifics provided in Appendix~\ref{app: dataset}:
\begin{itemize}[leftmargin=*]
    \item \textbf{Allen-Cahn Equation}~\cite{geng2024deep} is a classical stiff semilinear parabolic equation describing phase separation and phase transition processes in phase-field modeling of multicomponent physical systems. The conservative Allen-Cahn equation is an improved version of the classical equation that additionally maintains mass conservation. This equation contains a nonlinear potential function. In this paper, we select two common potential functions for the dataset: double-well potential and Flory-Huggins potential, recorded in the table as datasets AC-DW and AC-FH, respectively.
    \item \textbf{Shallow-Water Equation}~\cite{takamoto2022pdebench} is derived from the compressible Navier-Stokes equations and provide a suitable framework for simulating free-surface flow problems. This dataset is labeled as "Water" in the table.
    \item  \textbf{Adiabatic Process}~\cite{caratheodory1909untersuchungen} describes a thermodynamic process in an adiabatic system, where no heat or particle exchange occurs with the external environment, making it a type of closed system. This dataset is labeled as "Heat" in the table.
    \item  \textbf{Diffusion Equation}~\cite{mcrae1982numerical} is a parabolic partial differential equation that describes how particles diffuse under a concentration gradient. According to Fick's law, the diffusion flux is proportional to the concentration gradient. This dataset is labeled as "Diff" in the table.
    \item \textbf{Convection-Diffusion Equation}~\cite{gupta1984single} is a parabolic partial differential equation combining the diffusion and convection (advection) equations. It describes the transport of particles, energy, or other physical quantities within a physical system due to two processes: diffusion and convection. This dataset is labeled as "CD" in the table.
\end{itemize}

\subsubsection{\textbf{Train and Test}}\label{sec: Train and Test}
For all models, we employ the AdamW optimizer with a learning rate of $1\times10^{-3}$ and train them for $1,000$ epochs, with the results averaged over five fixed random seeds. The models were trained on servers equipped with 3090 GPUs. Unless otherwise stated, we use the standard RMSE (Root Mean Square Error) to measure prediction quality. The details of model training and testing are provided in Appendix~\ref{app: training details}.

\subsubsection{\textbf{Baselines}}\label{sec: Baselines}
We evaluate our method using leading models from neural operators, computer vision, and time series prediction. The neural operator models FNO~\cite{li2020fourier}, UNO~\cite{rahman2022u}, and CNO~\cite{raonic2023convolutional} are top choices for Banach space mapping tasks. For computer vision, we use the widely adopted UNet~\cite{ronneberger2015u}, which performs well across various applications. In time series analysis, we employ Transolver~\cite{wu2024transolver}, a model specifically designed for PDE problems. The model sizes are specified in Appendix~\ref{app: model details}.

\subsection{Main Results}
\begin{table*}[h!]
\centering
\caption{In all scenarios, we integrate our framework into existing baselines using the two training paradigms mentioned in Section~\ref{sec: training type}: +ECF$_{\mathcal{I}}$ and +ECF$_{\mathcal{S}}$ for comparative evaluation. The Root Mean Square Error (RMSE) is adopted as the evaluation metric, where lower values indicate higher prediction accuracy.
Cells highlighted in blue represent the best-performing method among the three alternatives for each baseline on the corresponding dataset: (1) the original baseline, (2) baseline+ECF$_{\mathcal{I}}$, and (3) baseline+ECF$_{\mathcal{S}}$.}
\label{tab: main experiment}
\small
\renewcommand{\arraystretch}{1.1}
\begin{tabular}{l|cccccc}
\hline
Model & AC-DW & AC-FH & Heat & Water & Diff & CD \\
\hline
FNO & $1.40\text{E-}01 \pm 4.38\text{E-}03$ & $1.54\text{E-}01 \pm 1.12\text{E-}02$ & $1.07\text{E-}01 \pm 6.21\text{E-}03$ & $9.81\text{E-}02 \pm 5.63\text{E-}03$ & $2.37\text{E-}02 \pm 1.31\text{E-}03$ & $2.93\text{E-}01 \pm 2.26\text{E-}02$ \\
+ECF$_{\mathcal{I}}$ & \cellcolor{blue!15}$1.26\text{E-}01 \pm 4.17\text{E-}03$ & \cellcolor{blue!15}$1.44\text{E-}01 \pm 8.27\text{E-}03$ & \cellcolor{blue!15}$8.94\text{E-}02 \pm 9.15\text{E-}03$ & \cellcolor{blue!15}$8.81\text{E-}02 \pm 1.22\text{E-}03$ & \cellcolor{blue!15}$1.60\text{E-}02 \pm 8.66\text{E-}04$ & \cellcolor{blue!15}$2.50\text{E-}01 \pm 3.75\text{E-}02$ \\
+ECF$_{\mathcal{S}}$ & $1.31\text{E-}01 \pm 5.64\text{E-}03$ & $1.49\text{E-}01 \pm 1.10\text{E-}02$ & $1.03\text{E-}01 \pm 6.90\text{E-}03$ & $9.59\text{E-}02 \pm 4.37\text{E-}03$ & $2.18\text{E-}02 \pm 1.06\text{E-}03$ & $2.90\text{E-}01 \pm 2.18\text{E-}02$ \\
\hline
UNO & $1.00\text{E-}01 \pm 7.00\text{E-}03$ & $9.69\text{E-}02 \pm 6.01\text{E-}03$ & $2.81\text{E-}01 \pm 5.32\text{E-}03$ & $9.54\text{E-}02 \pm 9.77\text{E-}03$ & $2.52\text{E-}02 \pm 1.02\text{E-}03$ & $1.41\text{E-}01 \pm 7.41\text{E-}03$ \\
+ECF$_{\mathcal{I}}$ & \cellcolor{blue!15}$6.23\text{E-}02 \pm 6.40\text{E-}03$ & \cellcolor{blue!15}$8.09\text{E-}02 \pm 5.41\text{E-}03$ & \cellcolor{blue!15}$2.59\text{E-}01 \pm 9.12\text{E-}03$ & $9.68\text{E-}02 \pm 7.29\text{E-}03$ & $2.52\text{E-}02 \pm 6.88\text{E-}04$ & \cellcolor{blue!15}$1.28\text{E-}01 \pm 1.24\text{E-}02$ \\
+ECF$_{\mathcal{S}}$ & $8.60\text{E-}02 \pm 6.06\text{E-}03$ & $9.55\text{E-}02 \pm 5.30\text{E-}03$ & $2.77\text{E-}01 \pm 4.51\text{E-}03$ & \cellcolor{blue!15}$9.36\text{E-}02 \pm 9.78\text{E-}03$ & \cellcolor{blue!15}$2.31\text{E-}02 \pm 9.79\text{E-}04$ & $1.39\text{E-}01 \pm 7.04\text{E-}03$ \\
\hline
CNO & $4.43\text{E-}01 \pm 3.80\text{E-}02$ & $8.18\text{E-}01 \pm 2.34\text{E-}01$ & $3.43\text{E-}01 \pm 1.89\text{E-}02$ & $8.34\text{E-}02 \pm 1.60\text{E-}02$ & $6.29\text{E-}02 \pm 7.62\text{E-}03$ & $1.02\text{E+}00 \pm 1.56\text{E-}01$ \\
+ECF$_{\mathcal{I}}$ & \cellcolor{blue!15}$3.12\text{E-}01 \pm 3.35\text{E-}02$ & \cellcolor{blue!15}$5.04\text{E-}01 \pm 5.36\text{E-}02$ & $2.50\text{E-}01 \pm 5.09\text{E-}02$ & \cellcolor{blue!15}$7.16\text{E-}02 \pm 1.27\text{E-}02$ & \cellcolor{blue!15}$5.34\text{E-}02 \pm 2.95\text{E-}03$ & \cellcolor{blue!15}$8.24\text{E-}01 \pm 5.39\text{E-}02$ \\
+ECF$_{\mathcal{S}}$ & $3.38\text{E-}01 \pm 1.97\text{E-}02$ & $7.02\text{E-}01 \pm 1.88\text{E-}01$ & \cellcolor{blue!15}$2.36\text{E-}01 \pm 1.16\text{E-}02$ & $8.21\text{E-}02 \pm 1.63\text{E-}02$ & $6.12\text{E-}02 \pm 7.66\text{E-}03$ & $9.27\text{E-}01 \pm 1.47\text{E-}01$ \\
\hline
UNet & $9.41\text{E-}01 \pm 6.93\text{E-}02$ & $9.95\text{E-}01 \pm 8.46\text{E-}02$ & $7.47\text{E-}01 \pm 7.66\text{E-}02$ & $7.63\text{E-}02 \pm 1.11\text{E-}02$ & $8.50\text{E-}02 \pm 2.03\text{E-}02$ & $1.13\text{E+}00 \pm 1.61\text{E-}01$ \\
+ECF$_{\mathcal{I}}$ & $7.87\text{E-}01 \pm 2.61\text{E-}01$ & \cellcolor{blue!15}$7.21\text{E-}01 \pm 1.31\text{E-}01$ & \cellcolor{blue!15}$4.31\text{E-}01 \pm 5.88\text{E-}02$ & $9.16\text{E-}02 \pm 1.38\text{E-}02$ & $8.46\text{E-}02 \pm 1.56\text{E-}02$ & $1.06\text{E+}00 \pm 1.98\text{E-}01$ \\
+ECF$_{\mathcal{S}}$ & \cellcolor{blue!15}$6.92\text{E-}01 \pm 7.80\text{E-}02$ & $8.04\text{E-}01 \pm 1.16\text{E-}01$ & $5.23\text{E-}01 \pm 6.04\text{E-}02$ & \cellcolor{blue!15}$7.52\text{E-}02 \pm 1.13\text{E-}02$ & \cellcolor{blue!15}$8.31\text{E-}02 \pm 2.03\text{E-}02$ & \cellcolor{blue!15}$9.77\text{E-}01 \pm 1.65\text{E-}01$ \\
\hline
Transolver & $4.56\text{E-}01 \pm 4.02\text{E-}02$ & $4.30\text{E-}01 \pm 1.31\text{E-}02$ & $9.50\text{E-}01 \pm 1.14\text{E-}02$ & $9.19\text{E-}02 \pm 6.60\text{E-}04$ & $8.61\text{E-}02 \pm 6.90\text{E-}04$ & $7.24\text{E-}01 \pm 4.49\text{E-}02$ \\
+ECF$_{\mathcal{I}}$ & \cellcolor{blue!15}$4.11\text{E-}01 \pm 1.31\text{E-}02$ & $4.51\text{E-}01 \pm 1.58\text{E-}02$ & \cellcolor{blue!15}$8.95\text{E-}01 \pm 8.93\text{E-}03$ & \cellcolor{blue!15}$8.51\text{E-}02 \pm 1.56\text{E-}03$ & \cellcolor{blue!15}$8.50\text{E-}02 \pm 8.10\text{E-}04$ & $8.78\text{E-}01 \pm 1.58\text{E-}01$ \\
+ECF$_{\mathcal{S}}$ & $4.29\text{E-}01 \pm 4.25\text{E-}02$ & \cellcolor{blue!15}$4.13\text{E-}01 \pm 1.12\text{E-}02$ & $9.22\text{E-}01 \pm 2.12\text{E-}02$ & $8.91\text{E-}02 \pm 2.40\text{E-}03$ & $8.51\text{E-}02 \pm 1.05\text{E-}03$ & \cellcolor{blue!15}$7.19\text{E-}01 \pm 4.32\text{E-}02$ \\
\hline
\end{tabular}

\end{table*}
Table~\ref{tab: main experiment} presents the root mean square error (RMSE) results for our method on various baselines, both without and with our framework applied across different test datasets. The first row indicates the datasets used; each subsequent group of three rows shows: the original baseline performance, results from embedding our framework during its training (+ECF$_{\mathcal{S}}$), and performance when embedding it post-training into pre-existing models (+ECF$_{\mathcal{S}}$).

As shown in Table~\ref{tab: main experiment}, our framework consistently improves model prediction results. The +ECF$_{\mathcal{I}}$ embedding approach enhances performance across most baselines and datasets, as evidenced by its $37.7\%$ error reduction for UNO on AC-DW problems and $32.5\%$ improvement for FNO on Diff problems, demonstrating clear superiority and strong generalization capability. Although +ECF$_{\mathcal{I}}$ occasionally causes minor performance degradation (e.g., a $4.88\%$ error increase for Transolver on AC-FH problems), these cases involve minimal error margins. We observe that +ECF$_{\mathcal{S}}$ also provides limited improvement for Transolver on AC-FH, suggesting the baseline already achieves low conservation error in this scenario, leaving little room for further enhancement. Additionally, as discussed in Section~\ref{sec: training type}, embedding the training module may alter gradient descent trajectories, potentially causing prediction fluctuations.

The experimental results for +ECF$_{\mathcal{S}}$ show that our framework reliably enhances model performance without participating in training, as seen in examples like UNET achieving a $26.4\%$ error reduction and CNO showing a $23.7\%$ improvement on AC-DW problems. The stable improvement validates Theorem~\ref{theorem 2}. Specifically, +ECF$_{\mathcal{S}}$ reduces errors in the prediction's zero-frequency component while preserving other frequencies, leading to reliable performance gains.

Furthermore, our findings reveal that in most cases, the +ECF$_{\mathcal{S}}$ framework yields prediction errors either larger than or comparable to those of the +ECF$_{\mathcal{I}}$ framework. Notably, the best-performing model on each dataset consistently comes from a baseline integrated with the +ECF$_{\mathcal{I}}$ framework. This demonstrates that incorporating our framework during model training typically provides beneficial guidance to the model.
\begin{table}[]
\caption{Computational time cost for training and testing on the AC-DW dataset.}
\label{tab: time}
\renewcommand{\arraystretch}{1.1} 
\setlength{\tabcolsep}{15pt}
\begin{tabular}{l|cc}
\hline
Model & Training Time (s) & Test Time (s) \\
\hline
FNO & $1.92\text{E+}03$ & $4.11\text{E+}00$ \\
+ECF$_{\mathcal{I}}$ & $1.95\text{E+}03$ & $4.43\text{E+}00$ \\
+ECF$_{\mathcal{S}}$ & $1.92\text{E+}03$ & $4.15\text{E+}00$ \\
\hline
UNO & $3.57\text{E+}03$ & $7.81\text{E+}00$ \\
+ECF$_{\mathcal{I}}$ & $3.61\text{E+}03$ & $7.96\text{E+}00$ \\
+ECF$_{\mathcal{S}}$ & $3.57\text{E+}03$ & $8.17\text{E+}00$ \\
\hline
CNO & $5.82\text{E+}03$ & $1.40\text{E+}01$ \\
+ECF$_{\mathcal{I}}$ & $5.83\text{E+}03$ & $1.43\text{E+}01$ \\
+ECF$_{\mathcal{S}}$ & $5.82\text{E+}03$ & $1.44\text{E+}01$ \\
\hline
UNet & $3.23\text{E+}03$ & $9.49\text{E+}00$ \\
+ECF$_{\mathcal{I}}$ & $3.24\text{E+}03$ & $9.82\text{E+}00$ \\
+ECF$_{\mathcal{S}}$ & $3.23\text{E+}03$ & $9.57\text{E+}00$ \\
\hline
Transolver & $3.78\text{E+}04$ & $7.13\text{E+}01$ \\
+ECF$_{\mathcal{I}}$ & $3.78\text{E+}04$ & $7.16\text{E+}01$ \\
+ECF$_{\mathcal{S}}$ & $3.78\text{E+}04$ & $7.13\text{E+}01$ \\
\hline
\end{tabular}
\vspace{-1.5 mm}
\end{table}
\subsection{Computational Efficiency Analysis}
To demonstrate the impact of our framework on model training and inference time costs, Table~\ref{tab: time} compares training (500 samples/1k epochs) and testing (100 samples) time costs before/after framework integration, with other parameters detailed in Appendix~\ref{app: training details}.

Table~\ref{tab: time} presents the training and test time costs of various models on the AC-DW dataset.
Since the +ECF$_{\mathcal{S}}$ framework first independently trains the neural operator before applying the correction framework during test, its training time remains identical to the baseline.
Our results show that +ECF$_{\mathcal{I}}$ slightly increases both training and test time costs across most models, while +ECF$_{\mathcal{S}}$ only marginally increases test time. However, these increments are negligible. For instance, on the UNO model, +ECF$_{\mathcal{I}}$ only introduces a $1.02\%$ training time increase and a $1.92\%$ test time increase. In summary, although our frameworks introduce minor time cost increases, these are extremely small relative to the models' overall training and inference time.

\subsection{Conservation Error}

\begin{table}[h]
\centering
\caption{Comparison of conservation quantity errors between different models and their corresponding +ECF frameworks across various physical tasks, where the error is measured by the average relative error of conservation quantities.}
\label{tab: all conservation error}
\small 
\renewcommand{\arraystretch}{1} 
\setlength{\tabcolsep}{2.5pt} 
\begin{tabular}{l|cccccc}
\toprule
 & AC-DW & AC-FH & Heat & Water & Diff & CD \\
\midrule
FNO & 2.00E-01 & 1.76E-01 & 4.77E-02 & 6.65E-03 & 5.69E-03 & 9.72E-02 \\
+ECF$_{\mathcal{I}}$ & 2.00E-06 & 1.52E-06 & 2.42E-06 & 2.45E-06 & 1.25E-06 & 2.92E-06 \\
+ECF$_{\mathcal{S}}$ & 2.03E-06 & 1.57E-06 & 2.45E-06 & 2.63E-06 & 1.24E-06 & 2.83E-06 \\
\midrule
UNO & 2.41E-01 & 7.19E-02 & 1.69E-01 & 1.29E-02 & 6.46E-03 & 1.81E-01 \\
+ECF$_{\mathcal{I}}$ & 1.97E-06 & 1.56E-06 & 2.44E-06 & 2.10E-06 & 1.26E-06 & 2.93E-06 \\
+ECF$_{\mathcal{S}}$ & 1.97E-06 & 1.54E-06 & 2.33E-06 & 1.68E-06 & 1.22E-06 & 2.70E-06 \\
\midrule
CNO & 9.59E-01 & 1.16E+00 & 1.15E+00 & 1.25E-02 & 1.31E-02 & 1.35E+00 \\
+ECF$_{\mathcal{I}}$ & 1.47E-07 & 2.90E-07 & 1.60E-07 & 3.35E-09 & 1.97E-07 & 2.07E-07 \\
+ECF$_{\mathcal{S}}$ & 1.80E-07 & 2.38E-07 & 1.72E-07 & 2.95E-08 & 7.76E-08 & 2.00E-07 \\
\midrule
UNet & 1.42E+00 & 2.36E+00 & 1.21E+00 & 8.57E-03 & 1.33E-02 & 1.51E+00 \\
+ECF$_{\mathcal{I}}$ & 1.64E-07 & 3.27E-07 & 1.40E-07 & 1.13E-08 & 2.07E-07 & 2.29E-07 \\
+ECF$_{\mathcal{S}}$ & 1.88E-07 & 2.35E-07 & 1.86E-07 & 5.06E-08 & 7.66E-08 & 2.32E-07 \\
\midrule
Transolver & 4.60E-01 & 2.20E-01 & 1.29E+00 & 3.42E-02 & 5.81E-03 & 3.03E-01 \\
+ECF$_{\mathcal{I}}$ & 2.59E-06 & 1.90E-06 & 1.79E-06 & 1.65E-06 & 1.25E-06 & 3.29E-06 \\
+ECF$_{\mathcal{S}}$ & 1.90E-06 & 1.44E-06 & 1.80E-06 & 2.43E-06 & 1.20E-06 & 2.87E-06 \\
\bottomrule
\end{tabular}
\end{table}

\begin{figure*}[htbp]
    \centering
    \subfigure{\includegraphics[width=0.3\linewidth]{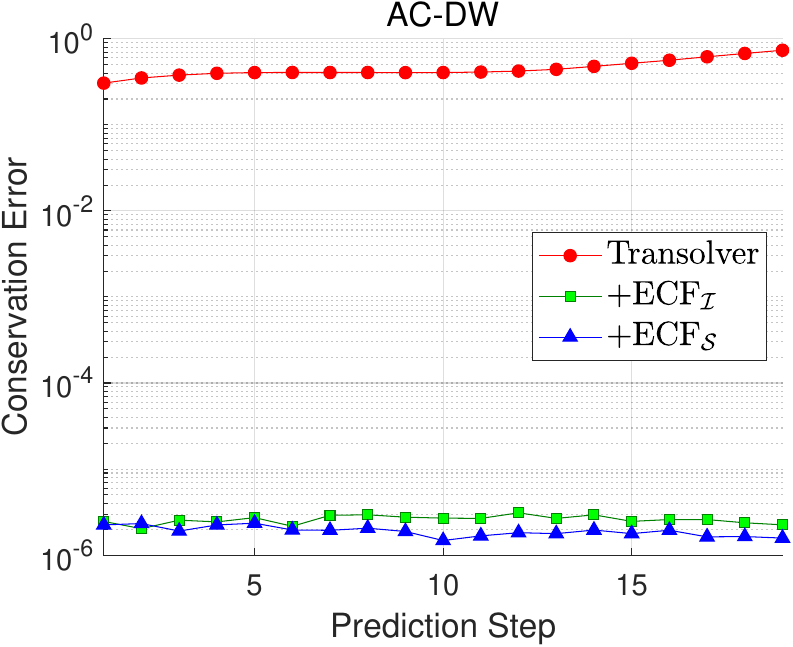}\label{fig:p1}}  
    \subfigure{\includegraphics[width=0.3\linewidth]{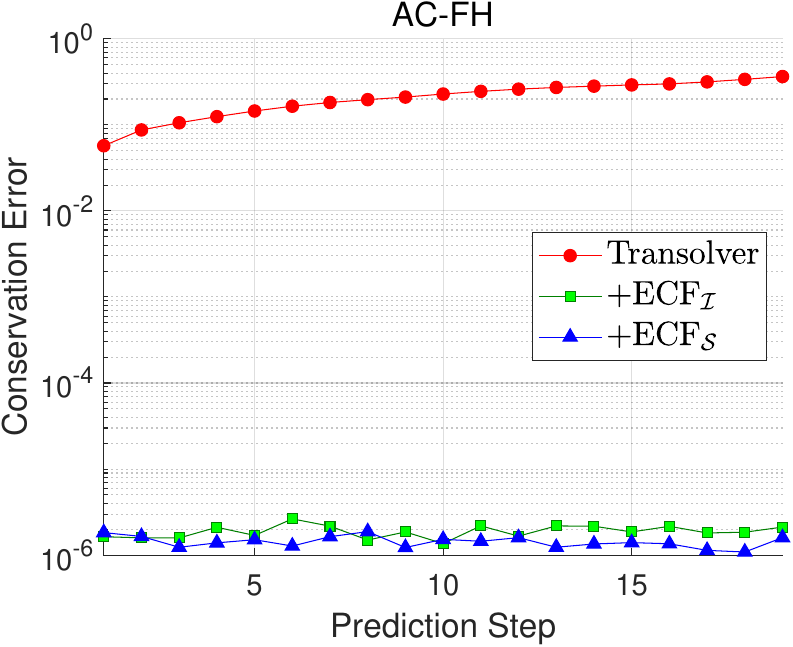}\label{fig:p2}} 
    \subfigure{\includegraphics[width=0.3\linewidth]{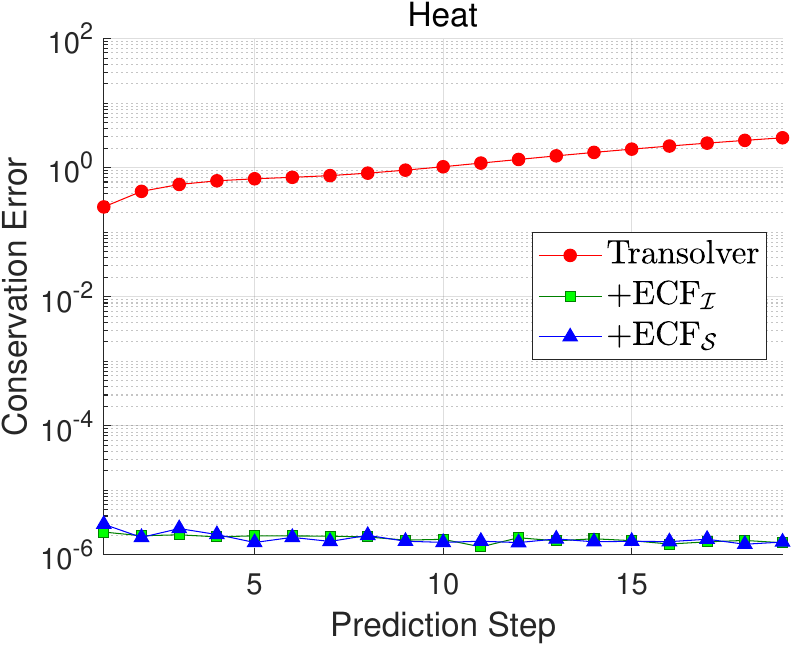}\label{fig:p3}}

    \vspace{-2mm}
    \subfigure{\includegraphics[width=0.3\linewidth]{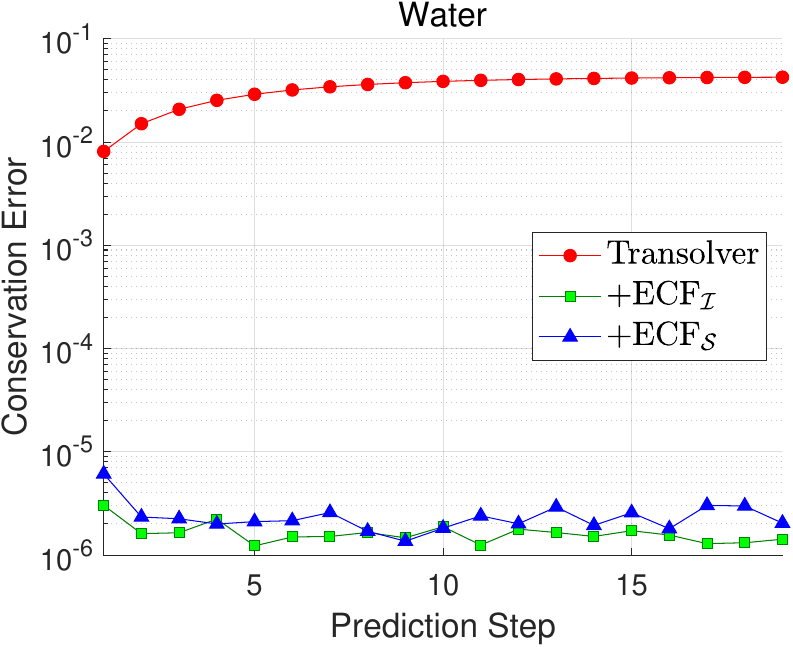}\label{fig:p4}} 
    \subfigure{\includegraphics[width=0.3\linewidth]{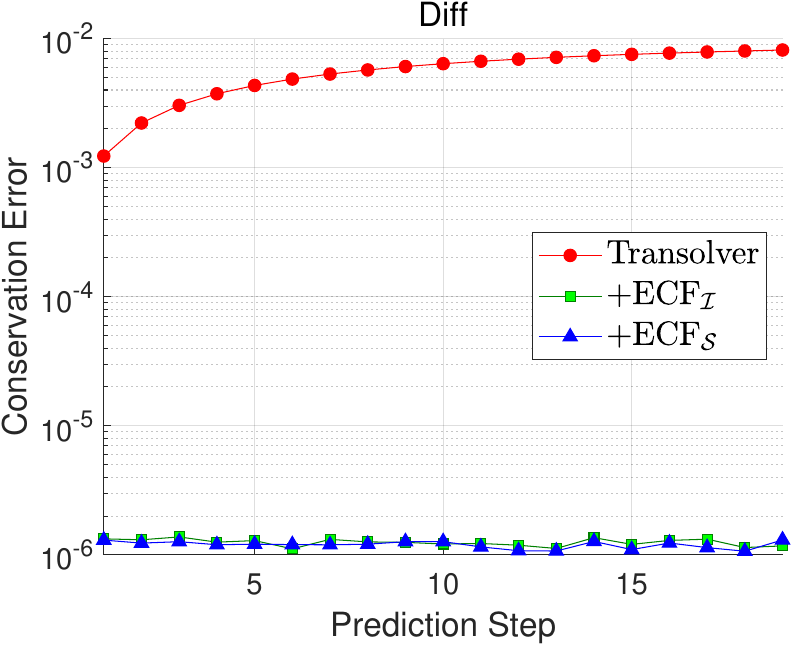}\label{fig:p5}}  
    \subfigure{\includegraphics[width=0.3\linewidth]{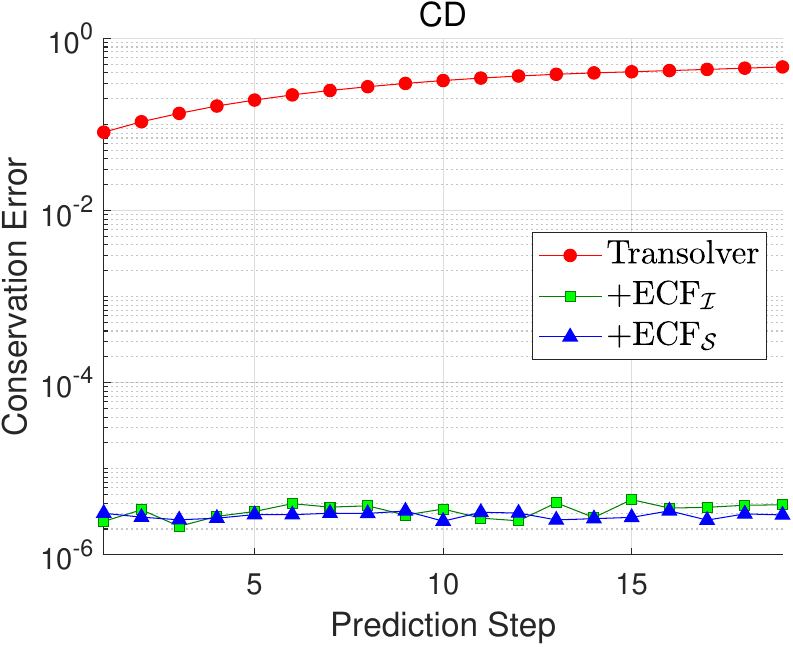}\label{fig:p6}}

    \caption{Time-evolution of relative conservation errors for baseline Transolver and its corresponding +ECF frameworks (+ECF$_{\mathcal{I}}$/+ECF$_{\mathcal{S}}$) in all datasets}
    \label{fig: different datset's Conservation Error}
\end{figure*}

In Figure~\ref{fig: different datset's Conservation Error}, we present the relative conservation errors of the predictions obtained by the original Transolver and its variants with +ECF$_{\mathcal{I}}$ and +ECF$_{\mathcal{S}}$ architectures across six datasets. The relative conservation error is defined as:

\begin{equation}
\begin{aligned}
Error(t) = \frac{\left|\int_{\Omega}u_{pred}(x,t)dx - \int_{\Omega}u_{true}(x,t)dx\right|}{\left|\int_{\Omega}u_{true}(x,t)dx\right|}
\end{aligned}
\end{equation}

Table~\ref{tab: all conservation error} details the conservation errors for each model. Results show that while all baseline models exhibit significant and temporally accumulating conservation errors, our framework effectively eliminates such errors. The relative conservation errors even exceed $100\%$ in some datasets - for instance, the UNet model shows a relative error of 2.36 (i.e., $236\%$) on the AC-FH dataset.
After integrating our framework, both the +ECF$_{\mathcal{I}}$ and +ECF$_{\mathcal{S}}$ errors are consistently maintained at or below the 1E-6 magnitude level. It should be noted that the non-zero errors in the conservation quantity arise because data stored in float32 floating-point format introduces inherent machine errors on the order of approximately 1E-7, while the values of the conserved quantity in the data itself are at the 1E-1 to 1E+0 magnitude level.
This factor results in a theoretical upper bound of 1E-6 magnitude for the conservation error in the framework's final output, which is fully consistent with our experimental results. This indicates that our proposed framework effectively enables the model to capture conservation laws and strictly enforces them, thereby resolving the issue of accumulating conservation errors in long-term predictions.

Furthermore, by examining Tables~\ref{tab: main experiment} and~\ref{tab: all conservation error}, it can be observed that when the conservation error in the model's predictions is relatively small, the performance improvement brought by our framework is quite limited. For instance, on the Diff dataset, the maximum relative conservation error of the model is only 1.33E-02 ($1.33\%$). In this case, when integrated into the model, our framework yields performance improvements not exceeding $5\%$ for all models except the FNO model.

\section{Conclusion}
We propose the \textbf{Exterior-Embedded Conservation Framework (ECF)} that enforces physical conservation laws in neural operators via frequency-domain correction. Featuring two training paradigms \textbf{Integrated Mode (ECF$_{\mathcal{I}}$)} and \textbf{Staged Mode (ECF$_{\mathcal{S}}$)} adapts to diverse optimizations without architectural changes. We establish the first quantitative relationship between conservation errors and RMSE while preserving model expressiveness. Experimental results demonstrate ECF's superior performance on conservation law-constrained PDEs, offering a rigorous yet practical solution for physics-informed machine learning.

\clearpage
\normalem
\bibliographystyle{ACM-Reference-Format}
\bibliography{sample-base}

\appendix
\newpage
\section{Appendix}

\subsection{Experient Details}\label{app: Experient Details}
\subsubsection{Dataset Details}\label{app: dataset}
Here, we list the PDEs of the datasets we used.

\begin{itemize}
    \item AC-DW: We consider the two-dimensional Allen-Cahn equation with the derivative of a double-well potential as the nonlinear coefficient function, which takes the form:
    \begin{equation}
    \begin{aligned} \
    \partial_t u = \epsilon \nabla^2 u + u -& u^3 - \frac{1}{|\Omega|}\int_\Omega (u - u^3)dx,\ & x\in\Omega,\ t\in[0,T]\\
    u(\bm x,0) &= u_0(\bm x),\ & x\in\Omega.
    \end{aligned}
    \end{equation}

    The dataset is generated by numerically solving the conservation equation on the unit square domain $\Omega = [0,1]^2$ discretized with a $128 \times 128$ uniform spatial grid, using an interface parameter $\epsilon = 0.01$ and periodic boundary conditions in both spatial dimensions. The temporal evolution is computed through $N_t = 1000$ time steps up to $T = 0.1$, with solutions sampled at $20$ equidistant time intervals. Initial conditions are constructed as random linear combinations of 20th-order two-dimensional Chebyshev polynomials, $u_0(x,y) = \sum_{i,j=0}^{19} c_{ij}T_i(x)T_j(y)$, where $c_{ij} \sim \mathcal{U}[-1,1]$.

    \item AC-FH: We consider the two-dimensional conserved Allen-Cahn equation with Flory-Huggins logarithmic potential, which takes the form:
    \begin{equation}
    \begin{aligned}
    &\partial_t u = \epsilon \nabla^2 u + \frac{\theta}{2}\ln\left(\frac{1+u}{1-u}\right) - \theta_c u  \\
    &- \frac{1}{|\Omega|}\int_\Omega \frac{\theta}{2}\ln\left(\frac{1+u}{1-u}\right) - \theta_c udx,\\ 
    &\bm{x}\in\Omega,\ t\in[0,T]\\
    &u(\bm{x},0) = u_0(\bm{x}),  \bm{x}\in\Omega.
    \end{aligned}
    \end{equation}
    
    The dataset is generated by numerically solving the conservation equation on the unit square domain $\Omega = [0,1]^2$ discretized with a $64 \times 64$ uniform spatial grid. The system parameters are set to $\epsilon = 0.01$, $\theta = 0.8$, and $\theta_c = 1.6$, with periodic boundary conditions enforced in both spatial dimensions. Temporal evolution is computed through $N_t = 1000$ time steps up to $T = 0.1$, with solutions sampled at $20$ equidistant time intervals. The initial condition $u_0(\bm{x})$ is constructed via random linear combinations of 20th-order two-dimensional Chebyshev polynomials:$u_0(\bm{x}) = \sum_{i,j=0}^{19} c_{ij}T_i(x)T_j(y), \quad c_{ij} \sim \mathcal{U}[-1,1]$, where $\bm{x} = (x,y) \in \Omega$.

    \item Heat: We consider the two-dimensional Adiabatic Process, which takes the form:
    \begin{equation}
    \begin{aligned}
        \partial_t u = D\nabla^2 u, \bm{x}\in\Omega,\ t\in[0,T]\\
        \nabla u\cdot \bm{n} = 0 \bm{x}\in \partial\Omega.
    \end{aligned}
    \end{equation}
    The dataset is generated by numerically solving the conservation equation on the unit square domain $\Omega = [0,1]^2$ discretized with a $128 \times 128$ uniform spatial grid, using a diffusion coefficient $D=0.01$ and adiabatic boundary conditions in both spatial dimensions.
    Temporal evolution is computed through $N_t = 1000$ time steps up to $T = 1$, with solutions sampled at $20$ equidistant time intervals. The initial condition $u_0(\bm{x})$ is constructed via random linear combinations of 20th-order two-dimensional Chebyshev polynomials:$u_0(\bm{x}) = \sum_{i,j=0}^{19} c_{ij}T_i(x)T_j(y), \quad c_{ij} \sim \mathcal{U}[-1,1]$, where $\bm{x} = (x,y) \in \Omega$.
    
    \item Water: We consider the two-dimensional Shallow Water Equation, which takes the form:
    \begin{equation}
        \begin{aligned}
            \partial_{t} h + \nabla \cdot (h \bm{u}) &= 0,\\
            \partial_{t}(h \bm{u}) + \nabla \cdot \left(\frac{1}{2} h \bm{u}^{2} + \frac{1}{2} g_{r} h^{2}\right) &= -g_{r} h \nabla b.
        \end{aligned}
    \end{equation}
    We need to predict water depth $h(\bm x, t)$. We employ the dataset "2D\_rdb\_NA\_NA.h5" from PDEbench as our benchmark dataset. We sample the solutions at 20 equidistant time intervals as the experimental dataset.

    \item Diff: We consider the two-dimensional Diffusion Equation, which takes the form:
    \begin{equation}
    \begin{aligned}
        \partial_t u &= D\nabla^2 u, &\bm{x}\in\Omega,\ t\in[0,T]\\
        u(\bm{x},0) &= u_0(\bm{x}),  &\bm{x}\in\Omega.
    \end{aligned}
    \end{equation}
    The dataset is generated by numerically solving the conservation equation on the unit square domain $\Omega = [0,1]^2$ discretized with a $100 \times 100$ uniform spatial grid, using a diffusion coefficient $D=0.01$ and periodic boundary conditions in both spatial dimensions.
    Temporal evolution is computed through $N_t = 1000$ time steps up to $T = 1$, with solutions sampled at $20$ equidistant time intervals. The initial state $u_0(\bm{x})$ is generated via a Gaussian random field $GRF(\tau = 5, \alpha = 2)$~\cite{dong2024accelerating}.

    \item CD: We consider the two-dimensional Convection-Diffusion Equation, which takes the form:
    \begin{equation}
        \begin{aligned}
            \frac{\partial \phi}{\partial t} + u \frac{\partial \phi}{\partial x} + v \frac{\partial \phi}{\partial y} = D \left( \frac{\partial^2 \phi}{\partial x^2} + \frac{\partial^2 \phi}{\partial y^2} \right)
        \end{aligned}
    \end{equation}
    We solve for the diffusion process of the concentration field $\phi(\bm x,t)$ within a uniform flow field with constant velocity. The dataset is generated by numerically solving the conservation equation on the unit square domain $\Omega = [0,1]^2$ discretized with a $128 \times 128$ uniform spatial grid.
    The flow field maintains constant velocities of $u=1.0$ in the $x$-direction and $v=0.5$ in the $y$-direction, with a diffusion coefficient $D=0.01$, while periodic boundary conditions are imposed along both spatial dimensions. Temporal evolution is computed through $N_t = 1000$ time steps up to $T = 0.1$, with solutions sampled at $20$ equidistant time intervals. Initial conditions are constructed as random linear combinations of 20th-order two-dimensional Chebyshev polynomials, $u_0(x,y) = \sum_{i,j=0}^{19} c_{ij}T_i(x)T_j(y)$, where $c_{ij} \sim \mathcal{U}[-1,1]$.
\end{itemize}
\subsubsection{Model Details}\label{app: model details}
The models used in our paper are all implemented based on their official source code provided on GitHub. Table~\ref{tab: model details} summarizes the key architectural hyperparameters of these models.
\begin{table}[h]
\centering
\caption{Model Parameters (continued from previous page)}
\begin{tabular}{lll}
\toprule
\textbf{Model Name} & \textbf{Parameter} & \textbf{Value(s)} \\
\midrule
 & Output Channels (C\_out) & 1 \\
\multirow{4}{*}{FNO} 
& Modes(modes1,modes2)) & 12,12 \\
 & Width & 20\\
 & Input Channels(C\_in) & 1\\
 & Output Channels(C\_out)  & 1 \\
\midrule
\multirow{5}{*}{CNO} 
& Input Channels (in\_dim) & 1\\
 & Num Layers (N\_layers) & 2 \\
 & Channel Multiplier & 16\\
 & Latent Lift Proj Dim & 8\\
 & Activation & 'cno\_lrelu' \\
\midrule
\multirow{4}{*}{UNO} 
& Input Width & 5\\
& Width & 32\\
& Domain Padding & 0 \\
& Scaling Factor & 3/4 \\
\midrule
\multirow{4}{*}{U-Net}  
 & Input/Output Channels & 1 / 1 \\
 & Kernel Size & 3 \\
 & Base Channels & 64 \\
 & Downsample Levels & 4 \\
\midrule
\multirow{5}{*}{Transolver}
 & Space\_dim & 1 \\
 & Num Layers & 8 \\
 & Num Head & 8 \\
 & Num Hidden & 256 \\
 & Unified Pos & 1 \\
\bottomrule
\label{tab: model details}
\end{tabular}
\end{table}

\subsubsection{Training Details}\label{app: training details}
Our experimental setup employs the AdamW optimizer with a fixed learning rate of $1×10^{-3}$ across all models. The training protocol utilizes a dataset of $500$ samples, complemented by $100$-sample validation and test sets, with a consistent batch size of $5$ throughout the $1,000$-epoch training process. To ensure model selection robustness, we perform evaluations at $50$-epoch intervals and retain the best-performing checkpoint. All results represent the average of five independent runs with fixed random seeds $(0,1,2,3,4)$ for reproducibility. The computational infrastructure consists of NVIDIA 3090 GPU servers, providing the necessary resources for this intensive training regimen. Unless otherwise specified, we adopt the mean absolute error (MAE) as the loss function during training, while using the root mean square error (RMSE) as the evaluation metric for prediction quality:
\begin{equation}
\begin{aligned}
    \text{MSE}(y,\hat{y}) &= \frac{1}{n}\sum_{i=1}^{n}(y_i - \hat{y}_i)^2\\
    \text{RMSE}(y,\hat{y}) &= \sqrt{\frac{1}{n}\sum_{i=1}^{n}(y_i - \hat{y}_i)^2}
\end{aligned}
\end{equation}
During the training phase, we randomly select data pairs from two consecutive time steps within a 20-step window for each prediction task. For validation and testing, the model sequentially predicts all subsequent time steps using only the initial time step's data as input. 

Notably, the +ECF$_{\mathcal{S}}$ framework employs a unique embedding strategy: its correction mechanism remains inactive during both training and validation phases, activating exclusively during testing. The implementation involves applying our proposed correction scheme to adjust conservation errors at each time step after the model completes predictions across the entire temporal sequence.

\subsection{Theorem Proof}\label{app: theorem proof}
\setcounter{theorem}{0}
\begin{theorem}\label{eq: theorem gauss}
The time derivative of the integrated quantity satisfies:
\begin{equation}
    \begin{aligned}
        \frac{d\bm{E}}{dt} = -\oint_{\partial\Omega}\bm{F}(\bm{u})\cdot\bm{n}dS + \int_{\Omega}\bm{S}d\bm{x},
    \end{aligned}
\end{equation}
where $\partial \Omega$ denotes the boundary of the domain $\Omega$ and $\bm{n}$ is the outward unit normal vector on $\partial\Omega$.
\end{theorem}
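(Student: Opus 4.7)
The plan is to derive the identity by differentiating the defining integral for $\bm{E}(t)$ under the integral sign, invoking the conservation PDE in Equation~\ref{eq:conservation_law_standard} to rewrite the integrand, and then applying the Gauss divergence theorem to the flux term. Because the statement is essentially a restatement of the PDE in its global (integrated) form, no deep machinery is required; the work is entirely in justifying the two analytic moves and bookkeeping the signs.

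First I would start from the definition $\bm{E}(t) = \int_\Omega \bm{u}(\bm{x},t)\,d\bm{x}$ and differentiate in $t$. Assuming that $\bm{u}$ is $C^1$ in $t$ and that $\partial_t \bm{u}$ is integrable on the (fixed) domain $\Omega$, the Leibniz rule for differentiation under the integral sign yields
\begin{equation}
\frac{d\bm{E}}{dt} = \int_\Omega \frac{\partial \bm{u}}{\partial t}(\bm{x},t)\,d\bm{x}.
\end{equation}
Next I would substitute the PDE $\partial_t \bm{u} = -\nabla \cdot \bm{F}(\bm{u}) + \bm{S}$ and split the resulting integral into two pieces,
\begin{equation}
\frac{d\bm{E}}{dt} = -\int_\Omega \nabla \cdot \bm{F}(\bm{u})\,d\bm{x} + \int_\Omega \bm{S}\,d\bm{x}.
\end{equation}

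The final step is to apply Gauss's divergence theorem componentwise to the flux tensor $\bm{F}(\bm{u}) \in \mathbb{R}^{d \times m}$, treating each of the $d$ rows as an $m$-dimensional vector field. Under the standard regularity hypotheses ($\bm{F}(\bm{u}) \in C^1(\overline{\Omega})$ and $\partial\Omega$ piecewise smooth) this converts the volume integral of the divergence into a boundary flux integral, giving
\begin{equation}
\int_\Omega \nabla \cdot \bm{F}(\bm{u})\,d\bm{x} = \oint_{\partial\Omega} \bm{F}(\bm{u}) \cdot \bm{n}\,dS,
\end{equation}
with $\bm{n}$ the outward unit normal. Combining this with the previous display produces the claimed identity.

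The main technical point, rather than a real obstacle, is ensuring that the regularity assumptions implicit in the setup (smoothness of $\bm{u}$ in both variables, smoothness of $\bm{F}$, and a sufficiently regular boundary $\partial\Omega$) are sufficient to license both the Leibniz interchange and the divergence theorem. I would state these as standing assumptions at the beginning of the proof so that the remainder is a direct three-line calculation; since the theorem is used only as a motivation/prerequisite for the conservation discussion, no weaker-regularity refinement is needed.
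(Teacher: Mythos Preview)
Your proposal is correct and follows essentially the same approach as the paper: differentiate $\bm{E}(t)$ under the integral sign, substitute the PDE for $\partial_t\bm{u}$, and apply Gauss's divergence theorem to the flux term. Your added remarks on the regularity hypotheses needed for the Leibniz interchange and the divergence theorem are a welcome bit of extra care, but the logical skeleton matches the paper's proof step for step.
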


\begin{proof}
    Clearly, we have
    \begin{equation}
        \begin{aligned}
            \frac{d\bm{E}}{dt} = \frac{d}{dt} \int_{\Omega}\bm{u}(\bm{x}, t)d\bm{x} = \int_{\Omega} \frac{\partial \bm{u}}{\partial t} d\bm{x}.
        \end{aligned}
    \end{equation}
    Since 
    \begin{equation}
        \begin{aligned}
            \frac{\partial \bm{u}}{\partial t} + \nabla \cdot \bm{F}(\bm{u}) = \bm{S}(\bm{u},\bm{x},t),
        \end{aligned}
    \end{equation}
    we can see that 
    \begin{equation}
        \begin{aligned}
             \int_{\Omega} \frac{\partial \bm{u}}{\partial t} d\bm{x} 
             & = \int_{\Omega} \bm{S}(\bm{u},\bm{x},t) - \nabla \cdot \bm{F}(\bm{u})\ d\bm{x} \\
             & = - \int_{\Omega} \nabla \cdot \bm{F}(\bm{u}) d\bm{x}+ \int_{\Omega}\bm{S}d\bm{x}.
        \end{aligned}
    \end{equation}
    Then, by Gauss's divergence theorem, we have 
    \begin{equation}
        \begin{aligned}
            \int_{\Omega} \nabla \cdot \bm{F}(\bm{u}) d\bm{x} = \oint_{\partial\Omega}\bm{F}(\bm{u})\cdot\bm{n}dS.
        \end{aligned}
    \end{equation}
    All together, we can conclude that
    \begin{equation}
        \begin{aligned}
            \frac{d\bm{E}}{dt} 
            & = \int_{\Omega} \frac{\partial \bm{u}}{\partial t} d\bm{x} \\
            & = - \int_{\Omega} \nabla \cdot \bm{F}(\bm{u}) d\bm{x}+ \int_{\Omega}\bm{S}d\bm{x} \\
            & = -\oint_{\partial\Omega}\bm{F}(\bm{u})\cdot\bm{n}dS + \int_{\Omega}\bm{S}d\bm{x}.
        \end{aligned}
    \end{equation}
\end{proof}
\begin{theorem}\label{theorem: frequent and error}
    The error between the predicted result $\bm{v}^{t+1}$ and the ground truth $\hat{\bm{v}}^{t+1}$ can be expressed in terms of their Fourier coefficients. Specifically:
    \begin{equation}
        \begin{aligned}
            &\|\bm{v}^{t+1} - \hat{\bm{v}}^{t+1}\|_{L^2(\Omega)}^2\\ 
            &= L^m \sum_{\bm{n}\in\mathbb{Z}^m} |\bm{c}_{\bm{n}}(t+1) - \hat{\bm{c}}_{\bm{n}}(t+1)|^2 \\
            &= L^m \Big[ |\bm{c}_{\mathbf{0}}(t+1) - \hat{\bm{c}}_{\mathbf{0}}(t+1)|^2 \\ 
            &\ \quad + \sum_{\bm{n}\in\mathbb{Z}^m\setminus\{\mathbf{0}\}} |\bm{c}_{\bm{n}}(t+1) - \hat{\bm{c}}_{\bm{n}}(t+1)|^2 \Big],
        \end{aligned}
    \end{equation}
    where $\|\cdot\|_{L^2(\Omega)}$ represents the $L^2$-norm on domain $\Omega$, defined as $\| f\|_{L^2(\Omega)} = \left(\int_{\Omega}|f(\bm{x})|^2d\bm{x}\right)^{\frac{1}{2}}$. 
\end{theorem}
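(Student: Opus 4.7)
The plan is to recognize this statement as a direct application of Parseval's (Plancherel's) identity for multidimensional Fourier series on $\Omega = [0,L]^m$, then split off the zero-frequency term by linearity.

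First I would expand both functions in the Fourier basis $\{e^{i\frac{2\pi}{L}\bm{n}\cdot\bm{x}}\}_{\bm{n}\in\mathbb{Z}^m}$ using the expansion in Equation~\ref{eq:Fourier}, so that
\begin{equation*}
\bm{v}^{t+1}(\bm{x}) - \hat{\bm{v}}^{t+1}(\bm{x}) = \sum_{\bm{n}\in\mathbb{Z}^m}\bigl(\bm{c}_{\bm{n}}(t+1) - \hat{\bm{c}}_{\bm{n}}(t+1)\bigr) e^{i\frac{2\pi}{L}\bm{n}\cdot\bm{x}}.
\end{equation*}
Next I would compute $\|\bm{v}^{t+1} - \hat{\bm{v}}^{t+1}\|_{L^2(\Omega)}^2$ by writing it as the integral of the product of the series with its complex conjugate, exchanging summation and integration (justified by $L^2$ convergence of the Fourier series, since both functions are assumed square-integrable over $\Omega$), and invoking the orthogonality relation
\begin{equation*}
\int_{\Omega} e^{i\frac{2\pi}{L}(\bm{n}-\bm{m})\cdot\bm{x}}\, d\bm{x} = L^m\,\delta_{\bm{n},\bm{m}},
\end{equation*}
which follows because the integral factorizes into $m$ one-dimensional integrals over $[0,L]$ and each factor is either $L$ (if the corresponding index matches) or zero. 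Only the diagonal terms survive, yielding
\begin{equation*}
\|\bm{v}^{t+1} - \hat{\bm{v}}^{t+1}\|_{L^2(\Omega)}^2 = L^m\sum_{\bm{n}\in\mathbb{Z}^m}|\bm{c}_{\bm{n}}(t+1) - \hat{\bm{c}}_{\bm{n}}(t+1)|^2.
\end{equation*}
Finally I would isolate the index $\bm{n}=\mathbf{0}$ from the remaining modes to obtain the claimed split form, which is purely a rewriting of the sum.

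The main obstacle is essentially bookkeeping rather than substance: verifying the orthogonality relation cleanly in the multi-index setting and justifying the swap of sum and integral. Under the implicit regularity assumption that $\bm{v}^{t+1}, \hat{\bm{v}}^{t+1} \in L^2(\Omega)$, Parseval's theorem applies immediately to their difference, so no delicate convergence argument is needed beyond citing the standard $L^2$ theory of Fourier series.
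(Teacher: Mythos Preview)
Your proposal is correct and follows essentially the same approach as the paper's proof: expand the difference in the Fourier basis, invoke orthogonality (the paper phrases this as $\|e^{i\frac{2\pi}{L}\bm{n}\cdot\bm{x}}\|_{L^2(\Omega)}^2 = L^m$, you phrase it via the Kronecker-delta integral, which is equivalent), obtain the Parseval identity, and then split off the $\bm{n}=\mathbf{0}$ term. There is nothing to add.
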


\begin{proof}
    By the Fourier decomposition, we can see that \begin{equation}
        \begin{aligned}
            & \| \bm{v}^{t+1} - \hat{\bm{v}}^{t+1}\|_{L^2(\Omega)}^2 \\
            & = \| \sum_{\bm{n}\in\mathbb{Z}^m} \bm {c}_{\bm{n}}(t+1) e^{i\frac{2\pi}{L}\bm{n}\cdot\bm{x}} - \sum_{\bm{n}\in\mathbb{Z}^m} \hat{\bm {c}}_{\bm{n}}(t+1) e^{i\frac{2\pi}{L}\bm{n}\cdot\bm{x}} \|_{L^2(\Omega)}^2 \\
            & = \| \sum_{\bm{n}\in\mathbb{Z}^m} \big( \bm{c}_{\bm{n}}(t+1) - \hat{\bm {c}}_{\bm{n}}(t+1) \big) e^{i\frac{2\pi}{L}\bm{n}\cdot\bm{x}} \|_{L^2(\Omega)}^2 \\
            & = \sum_{\bm{n}\in\mathbb{Z}^m} \| e^{i\frac{2\pi}{L}\bm{n}\cdot\bm{x}} \|_{L^2(\Omega)}^2 | \bm{c}_{\bm{n}}(t+1) - \hat{\bm {c}}_{\bm{n}}(t+1) |^{2}.
        \end{aligned}
    \end{equation}
    The last equality above holds because of the orthogonality of the Fourier basis $e^{i\frac{2\pi}{L}\bm{n}\cdot\bm{x}}$.
    Noting the fact that
    \begin{equation}
        \begin{aligned}
            \| e^{i\frac{2\pi}{L}\bm{n}\cdot\bm{x}} \|_{L^2(\Omega)}^2 = L^{m},
        \end{aligned}
    \end{equation}
    we have 
    \begin{equation}
        \begin{aligned}
            & \sum_{\bm{n}\in\mathbb{Z}^m} \| e^{i\frac{2\pi}{L}\bm{n}\cdot\bm{x}} \|_{L^2(\Omega)}^2 | \bm{c}_{\bm{n}}(t+1) - \hat{\bm {c}}_{\bm{n}}(t+1) |^2 \\
            & = L^m \sum_{\bm{n}\in\mathbb{Z}^m} |\bm{c}_{\bm{n}}(t+1) - \hat{\bm{c}}_{\bm{n}}(t+1)|^2 \\
            & = L^m \Big[ |\bm{c}_{\mathbf{0}}(t+1) - \hat{\bm{c}}_{\mathbf{0}}(t+1)|^2 \\
            &\ \quad + \sum_{\bm{n}\in\mathbb{Z}^m\setminus\{\mathbf{0}\}} |\bm{c}_{\bm{n}}(t+1) - \hat{\bm{c}}_{\bm{n}}(t+1)|^2 \Big].
        \end{aligned}
    \end{equation}
    All together, we can conclude that
    \begin{equation}
        \begin{aligned}
            & \| \bm{v}^{t+1} - \hat{\bm{v}}^{t+1}\|_{L^2(\Omega)}^2 \\
            & = \sum_{\bm{n}\in\mathbb{Z}^m} \| e^{i\frac{2\pi}{L}\bm{n}\cdot\bm{x}} \|_{L^2(\Omega)}^2 | \bm{c}_{\bm{n}}(t+1) - \hat{\bm {c}}_{\bm{n}}(t+1) |^{2} \\
            & = L^m \sum_{\bm{n}\in\mathbb{Z}^m} |\bm{c}_{\bm{n}}(t+1) - \hat{\bm{c}}_{\bm{n}}(t+1)|^2 \\
            & = L^m \Big[ |\bm{c}_{\mathbf{0}}(t+1) - \hat{\bm{c}}_{\mathbf{0}}(t+1)|^2 \\
            &\ \quad + \sum_{\bm{n}\in\mathbb{Z}^m\setminus\{\mathbf{0}\}} |\bm{c}_{\bm{n}}(t+1) - \hat{\bm{c}}_{\bm{n}}(t+1)|^2 \Big].
        \end{aligned}
    \end{equation}
\end{proof}

\begin{theorem}\label{theorem: 2}
Let $\hat{\bm{v}}^{t+1}$, $\bm{v}^{t+1}$, and $\bar{\bm{v}}^{t+1}$ denote the ground truth at time $t+1$, the solution function predicted by the neural operator, and the corrected solution function obtained via Equation~\ref{eq:Corrected conserved quantities}, respectively. Then:
\begin{equation}
\| \bar{\bm{v}}^{t+1} - \hat{\bm{v}}^{t+1} \|_{L^2(\Omega)} \leq \| \bm{v}^{t+1} - \hat{\bm{v}}^{t+1} \|_{L^2(\Omega)}.
\end{equation}
The equality holds if and only if:
\begin{equation}
    \int_\Omega \bm{v}^{t+1}d\bm{x} = \int_\Omega \hat{\bm{v}}^{t+1} d\bm{x}.
\end{equation}
\end{theorem}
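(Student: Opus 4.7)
The plan is to reduce the inequality to a direct comparison of Fourier coefficients via Theorem~\ref{thorem: frequent and error}, using the fact that the correction in Equation~\ref{eq:Corrected conserved quantities} only alters the zero-frequency mode. First I would apply Theorem~\ref{thorem: frequent and error} to both $\bm{v}^{t+1}-\hat{\bm{v}}^{t+1}$ and $\bar{\bm{v}}^{t+1}-\hat{\bm{v}}^{t+1}$ to express the squared $L^2$-errors as sums of squared coefficient differences across $\bm{n}\in\mathbb{Z}^m$, separating out the $\bm{n}=\mathbf{0}$ term.

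Next I would exploit the conservation identity $\hat{\bm{c}}_{\mathbf{0}}(t)=\hat{\bm{c}}_{\mathbf{0}}(t+1)$ established in Equation~\ref{eq:a0 equality}. Since the decoder defined in Equation~\ref{eq:Corrected conserved quantities} replaces $\bm{c}_{\mathbf{0}}(t+1)$ with $\hat{\bm{c}}_{\mathbf{0}}(t)$ while leaving all modes with $\bm{n}\neq\mathbf{0}$ untouched, the corrected solution has Fourier coefficients $\bar{\bm{c}}_{\mathbf{0}}(t+1)=\hat{\bm{c}}_{\mathbf{0}}(t)=\hat{\bm{c}}_{\mathbf{0}}(t+1)$ and $\bar{\bm{c}}_{\bm{n}}(t+1)=\bm{c}_{\bm{n}}(t+1)$ for $\bm{n}\neq\mathbf{0}$. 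Substituting into the Parseval-type decomposition gives
\begin{equation}
\|\bar{\bm{v}}^{t+1}-\hat{\bm{v}}^{t+1}\|_{L^2(\Omega)}^2 = L^m\sum_{\bm{n}\neq\mathbf{0}}|\bm{c}_{\bm{n}}(t+1)-\hat{\bm{c}}_{\bm{n}}(t+1)|^2,
\end{equation}
which differs from $\|\bm{v}^{t+1}-\hat{\bm{v}}^{t+1}\|_{L^2(\Omega)}^2$ only by the nonnegative term $L^m|\bm{c}_{\mathbf{0}}(t+1)-\hat{\bm{c}}_{\mathbf{0}}(t+1)|^2$, yielding the desired inequality.

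For the equality characterization, equality forces $\bm{c}_{\mathbf{0}}(t+1)=\hat{\bm{c}}_{\mathbf{0}}(t+1)$. I would then translate this coefficient identity back to the spatial domain by noting that $\bm{c}_{\mathbf{0}}(t+1)L^m=\int_\Omega\bm{v}^{t+1}d\bm{x}$ and analogously for the ground truth, recovering exactly the stated condition $\int_\Omega\bm{v}^{t+1}d\bm{x}=\int_\Omega\hat{\bm{v}}^{t+1}d\bm{x}$; the reverse direction follows by running the same equivalence backward.

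No serious obstacle is expected: the proof is essentially a bookkeeping exercise once Theorem~\ref{thorem: frequent and error} is in hand. The only subtlety worth stating carefully is the use of the conservation identity $\hat{\bm{c}}_{\mathbf{0}}(t)=\hat{\bm{c}}_{\mathbf{0}}(t+1)$, which is what guarantees that substituting the input's zero-frequency coefficient into the prediction actually reduces (rather than shifts) the error at the zero mode; without that identity the correction would only be optimal in a different sense.
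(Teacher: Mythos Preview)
Your proposal is correct and follows essentially the same route as the paper: apply the Parseval-type decomposition of Theorem~\ref{thorem: frequent and error} to both errors, observe that the corrected solution's zero-mode coefficient matches the ground truth's (so that term vanishes), and conclude the inequality with equality precisely when $\bm{c}_{\mathbf{0}}(t+1)=\hat{\bm{c}}_{\mathbf{0}}(t+1)$. If anything, you are slightly more explicit than the paper in flagging the role of the conservation identity $\hat{\bm{c}}_{\mathbf{0}}(t)=\hat{\bm{c}}_{\mathbf{0}}(t+1)$ from Equation~\ref{eq:a0 equality}, which the paper's proof uses silently when it writes the corrected zero-mode coefficient directly as $\hat{\bm{c}}_{\mathbf{0}}(t+1)$.
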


\begin{proof}
    By the conclusion of Theorem \ref{thorem: frequent and error}, we can decompose the error as follows:
    \begin{equation}
        \begin{aligned}
            & \| \bar{\bm{v}}^{t+1} - \hat{\bm{v}}^{t+1} \|^{2}_{L^2(\Omega)} \\
            & = L^m \Big[ |\hat{\bm{c}}_{\mathbf{0}}(t+1) - \hat{\bm{c}}_{\mathbf{0}}(t+1)|^2 \\
            &\ \quad + \sum_{\bm{n}\in\mathbb{Z}^m\setminus\{\mathbf{0}\}} |\bm{c}_{\bm{n}}(t+1) - \hat{\bm{c}}_{\bm{n}}(t+1)|^2 \Big] \\
            & = L^{m}\sum_{\bm{n}\in\mathbb{Z}^m\setminus\{\mathbf{0}\}} |\bm{c}_{\bm{n}}(t+1) - \hat{\bm{c}}_{\bm{n}}(t+1)|^2,
        \end{aligned}
    \end{equation}
    and 
    \begin{equation}
        \begin{aligned}
            & \| \bm{v}^{t+1} - \hat{\bm{v}}^{t+1} \|^{2}_{L^2(\Omega)} \\
            & = L^m \Big[ |\bm{c}_{\mathbf{0}}(t+1) - \hat{\bm{c}}_{\mathbf{0}}(t+1)|^2 \\ 
            &\ \quad + \sum_{\bm{n}\in\mathbb{Z}^m\setminus\{\mathbf{0}\}} |\bm{c}_{\bm{n}}(t+1) - \hat{\bm{c}}_{\bm{n}}(t+1)|^2 \Big].
        \end{aligned}
    \end{equation}
    Thus, we have 
    \begin{equation}
        \begin{aligned}
            & \| \bm{v}^{t+1} - \hat{\bm{v}}^{t+1} \|^{2}_{L^2(\Omega)} - \| \bar{\bm{v}}^{t+1} - \hat{\bm{v}}^{t+1} \|^{2}_{L^2(\Omega)} \\ 
            & = L^{m} |\bm{c}_{\mathbf{0}}(t+1) - \hat{\bm{c}}_{\mathbf{0}}(t+1)|^2 \geq 0,
        \end{aligned}
    \end{equation}
    which leads to
    \begin{equation}
        \begin{aligned}
            \| \bar{\bm{v}}^{t+1} - \hat{\bm{v}}^{t+1} \|_{L^2(\Omega)} \leq \| \bm{v}^{t+1} - \hat{\bm{v}}^{t+1} \|_{L^2(\Omega)}.
        \end{aligned}\label{eq: 51}
    \end{equation}
    The equality of Equation~\ref{eq: 51} holds if and only if
    \begin{equation}
        \begin{aligned}
            |\bm{c}_{\mathbf{0}}(t+1) - \hat{\bm{c}}_{\mathbf{0}}(t+1)|^2 = 0,
        \end{aligned}
    \end{equation}
    which means
    \begin{equation}
        \begin{aligned}
            \bm{c}_{\mathbf{0}}(t+1) = \hat{\bm{c}}_{\mathbf{0}}(t+1),
        \end{aligned}
    \end{equation}
    i.e.
    \begin{equation}
        \begin{aligned}
            \int_\Omega \bm{v}^{t+1}d\bm{x} = \int_\Omega \hat{\bm{v}}^{t+1} d\bm{x}.
        \end{aligned}
    \end{equation}
\end{proof}

\end{document}